\newcommand{\be}{\begin{equation}}
\newcommand{\en}{\end{equation}}
\newcommand{\bea}{\begin{eqnarray}}
\newcommand{\ena}{\end{eqnarray}}
\newcommand{\beano}{\begin{eqnarray*}}
\newcommand{\enano}{\end{eqnarray*}}
\newcommand{\bee}{\begin{enumerate}}
\newcommand{\ene}{\end{enumerate}}
\newcommand{\mult}{\,{\scriptstyle \square}\,}
\newcommand{\mb}{\mathbb}
\newcommand{\Hil}{{\cal H}}
\newcommand{\F}{{\cal F}}
\newcommand{\B}{{\cal B}}
\newcommand{\Lc}{{\cal L}}
\newcommand{\LL}{{\cal L}}
\newcommand{\D}{{\cal D}}
\newcommand{\M}{{\cal M}}
\newcommand{\A}{{\cal A}}
\newcommand{\Ao}{{\cal A}_0}
\newcommand{\up}{\upharpoonright}
\newcommand{\bfn}{\mathbf{n}}
\newtheorem{thm}{Theorem}
\newtheorem{lemma}[thm]{Lemma}
\newtheorem{prop}[thm]{Proposition}
\newtheorem{defn}[thm]{Definition}
\newenvironment{proof}{\noindent {\bf Proof:}}{\hfill$\Box$}
\newcommand{\ip}[2]{\langle {#1}|{#2}\rangle}
\begin{document}

\thispagestyle{empty}

\vspace*{1cm}

\begin{center}
{\Large \bf Representations and derivations of quasi *-algebras induced by {\em local modifications} of states}   \vspace{2cm}\\

{\large F. Bagarello}\\
  Dipartimento di Metodi e Modelli Matematici,
Fac. Ingegneria, Universit\`a di Palermo, I-90128  Palermo, Italy\\
e-mail: bagarell@unipa.it
\vspace{5mm}\\
{\large A. Inoue}\\
Department of Applied Mathematics, Fukuoka University, Fukuoka
814-0180, Japan\\ email: a-inoue@fukuoka-u.ac.jp\\
\vspace{5mm}{\large C. Trapani}\\ Dipartimento di Matematica ed
Applicazioni, Universit\`a di Palermo,\\ I-90123 Palermo
(Italy)\\e-mail:
trapani@unipa.it\\
\end{center}

\vspace*{2cm}

\begin{abstract}
\noindent
The relationship between the GNS representations associated to states on a quasi *-algebra, which are {\em local modifications} of each other (in a sense which we will discuss) is examined. The role of local modifications on the spatiality of the corresponding induced derivations describing the dynamics of a given quantum system with infinite degrees of freedom is discussed.
\end{abstract}

\vspace{2cm}


\vfill

\newpage

\section{Introduction and preliminaries}

In two recent papers, \cite{bit1,bit2}, we have investigated the
role of derivations of quasi *-algebras and the possibility of
finding a certain symmetric operator
which {\em implements the derivation}, in the sense that in a suitable
representation the derivation can be written as a
commutator with an operator which in the physical literature is usually called the {\em effective hamiltonian}. This is useful for physical applications and produces an
algebraic framework in which the time evolution of some physical
model can be analyzed, \cite{fbrev}.

Here we continue our analysis, taking inspiration again from
physical motivations: it is  known \cite{sew} that in a
physical context {\em local modifications do not affect much the
main physical results}. Our interest here is to understand this
statement more in detail, mainly in the framework of quasi
*-algebras, \cite{schmu,ait_book}, which, as we have discussed in several other places, see
\cite{ait_book, fbrev,ctrev}, in our opinion play an important role in the
mathematical description of quantum mechanical systems with infinite
degrees of freedom.

\vspace{2mm}

Just as an introductory example, let us consider a C*-algebra $\A$ with unit $e$, and let $\omega$ and $\omega'$
be two (different) positive linear functionals on $\A$. Let further $(\pi_\omega,\xi_\omega,\Hil_\omega)$ and
$(\pi_{\omega'},\xi_{\omega'},\Hil_{\omega'})$ be their associated GNS-representations. An interesting problem is the following: under which conditions on $\omega$ and $\omega'$ are the representations $\pi_{\omega}$ and $\pi_{\omega'}$ unitarily equivalent?

It is somehow more convenient to consider first the following preliminary problem: {\em how  must $\omega$ and $\omega'$ be related  for $\pi_{\omega'}$ to be unitarily equivalent to a
sub *-representation of $\pi_{\omega}$} ? An easy proof shows that

 {\em $\pi_{\omega'}$ is unitarily equivalent to a sub *-representation of $\pi_\omega$ if, and
only if, there exists a sequence $\{b_n\}$ of elements of $\A$ such that
$\omega'(a)=\lim_{n\to\infty}\omega(b_n^* a b_n)$  $\forall\,a\in\A$, and the sequence $\{\pi_\omega(b_n)\xi_\omega\}$ converges in
$\Hil_\omega$.}

We refer to \cite{sew} for the physical implications of this result. Here we observe that, in particular, if $\omega$ is a positive linear functional on $\A$,
and $b\in\A$ a fixed element such that $\omega(b^*b)\neq 0$,
then the GNS-representation associated to
$\omega_b(\cdot)=\omega(b^*\cdot b)$ is unitarily
equivalent to a sub *-representation of $\pi_\omega$. This means
that there exists a subspace $\Hil_\omega^b$ of $\Hil_\omega$ and a
unitary operator $U: \Hil_{\omega_b}\to\Hil_\omega^b$ such that
$\pi_{\omega_b}(a)=U^{*}\pi_\omega^b(a)U$ for all $a\in\A$, where $\pi_\omega^b(a):=\pi_{\omega}(a)\upharpoonright
 _{\Hil_\omega^b}$.

 Going back to our original question, i.e. to the unitary equivalence of $\pi_\omega$ and $\pi_{\omega'}$, we will postpone this analysis to the next section, where the more relevant case of quasi *-algebras is discussed.

\vspace{2mm}

 Let now $\delta$ be a *-derivation on $\A$ and let us define $\delta_{\pi_{\omega}^b}
 (a)=\pi_{\omega}^b(\delta(a))$ and $\delta_{\pi_{\omega_b}}
 (a)=\pi_{\omega_b}(\delta(a))$, $a\in\A$.  The first obvious remark
 is that, under our assumptions,
 $$
\delta_{\pi_{\omega_b}}(a)=\pi_{\omega_b}(\delta(a))=U^*\pi_{\omega}^b(\delta(a))U=U^*\delta_{\pi_{\omega}^b}
 (a)U.
 $$
Secondly, if  $\delta_{\pi_{\omega}^b}(a)$ is spatial,
i.e.  there exists an element $H_{\pi_{\omega}^b}\in
B(\Hil_{\omega}^b)$ such that
$\delta_{\pi_{\omega}^b}(a)=i[H_{\pi_{\omega}^b},\pi_{\omega}^b(a)]$,
$a\in\A$, then  $\delta_{\pi_{\omega_b}}$ is also spatial and the
implementing operator is
$H_{\pi_{\omega_b}}=U^*H_{\pi_{\omega}^b}U$, which belongs to
$B(\Hil_{\omega_b})$.

>From a physical point of view we can interpret this result as
follows: it is well known that no hamiltonian operator exists in
general which implements the dynamics of an infinitely extended system,
\cite{sew}. For this reason one has to consider  a finite-volume approximation of
the system, for which a self-adjoint energy operator $H_V$ can be
defined. Associated to $H_V$ we can introduce a finite-volume
derivation $\delta_V(X)=i[H_V,X]$, for each observable $X$ localized
in $V$, and a time evolution $\alpha_V^t(X)=e^{iH_Vt}Xe^{-iH_Vt}$.
However, usually, neither $\delta_V(X)$ nor $\alpha_V^t(X)$ converge
in the uniform, strong or weak topology. One usually has to consider
some representation of the abstract algebra and, as in \cite{bit1},
the corresponding family of {\em effective derivations}, i.e. derivations
in the given representation. This net of derivations may now be converging
and, under suitable conditions, it still defines a derivation whose
implementing operator is  the { effective hamiltonian}. Therefore the choice of the representations in this procedure is crucial.
Our results  show that, in fact, there is no essential difference
between the effective hamiltonians that we obtain starting from two
different representations, at least if they are GNS generated by a
fixed positive linear functional $\omega$ and by a different positive linear functional $\omega'=\omega_b$,
for each possible choice of $b\in\A$. In particular this implies that, if $b$ is a
local observable (meaning by this that it belongs to some of the
$\A_V$'s which produce the quasi local C*-algebra,
\cite{sew,fbrev}), then the two related derivations are unitarily
equivalent and, consequently, the two effective hamiltonians are unitarily
equivalent as
well. Hence their physical content is essentially the same, as claimed before.

\section{The case of quasi *-algebras}

We begin this section with recalling briefly the definitions of quasi *-algebras and their *-representations and sub *-representations. More details can be found in \cite{schmu,ait_book}.

{Let $\A$ be a complex vector space and $\Ao$ a  $^\ast$ -algebra contained
in $\A$. We say that $\A$ is a quasi  $^\ast$ -algebra with
distinguished  $^\ast$ -algebra $\Ao$ (or, simply, over $\Ao$) if
\begin{itemize}\item[(i)] the left multiplication $ax$ and the right multiplication $
xa$ of an element $a$ of $\A$ and an element $x$ of $\A_0$ which
 extend the multiplication of $\A_0$ are always defined and
bilinear; \item[(ii)] $x_1 (x_2 a)= (x_1x_2 )a$ and $x_1(a
 x_2)= (x_1 a) x_2$, for each $x_1, x_2 \in \A_0$ and $a \in \A$;

\item[(iii)] an involution $*$ which extends the involution of $\A_0$
is defined in $\A$ with the property $(ax)^*= x^*a^*$ and $(xa)^
* =a^* x^*$ for each $x \in \A_0$ and $a \in \A$.
\end{itemize}
}

Let now $\D$ be a dense subspace of a Hilbert space $\Hil$. We denote by $ \Lc^\dagger(\D,\Hil) $ the set of all
(closable) linear operators $X$ such that $ {\D}(X) = {\D},\; {\D}(X^*) \supseteq {\D}.$

The set $ \LL^\dagger(\D,\Hil ) $ is a  partial *-algebra
 with respect to the following operations: the usual sum $X_1 + X_2 $,
the scalar multiplication $\lambda X$, the involution $ X \mapsto X^\dagger = X^* \up {\D}$ and the \emph{
(weak)} partial multiplication $X_1 \mult X_2 = {{X_1}^\dagger}^* X_2$, defined whenever $X_2$ is a weak right
multiplier of $X_1$ (we shall write $X_2 \in R^{\rm w}(X_1)$ or $X_1 \in L^{\rm w}(X_2)$), that is, iff $ X_2
{\D} \subset {\D}({{X_1}^\dagger}^*)$ and  $ X_1^* {\D} \subset {\D}(X_2^*).$

Let $\Lc^\dagger(\D)$ be the subspace of $\Lc^\dagger(\D,\Hil)$ consisting of all its elements  which leave, together with their adjoints, the domain $\D$ invariant. Then $\Lc^\dagger(\D)$ is a *-algebra with
respect to the usual operations.

Let $(\A,\Ao)$ be a quasi *-algebra with identity $e$ and $\D_\pi$  a dense domain in a certain Hilbert
space $\Hil_\pi$. A linear map $\pi$ from $\A$ into $\LL^\dagger(\D_\pi, \Hil_\pi)$ such that:

(i) $\pi(a^*)=\pi(a)^\dagger, \quad \forall a\in \A$,

(ii) if $a\in \A$, $x\in \Ao$, then $\pi(a)${$\Box$}\!\! $\pi(x)$ is well defined and
$\pi(ax)=\pi(a)${$\Box$}\!\! $\pi(x)$,

\noindent
is called  a *-representation of $\A$. Moreover, if

(iii) $\pi(\Ao)\subset \LL^\dagger(\D_\pi)$,

\noindent then $\pi$ is said to be a *-representation of the quasi *-algebra $(\A,\Ao)$.

\medskip If $\pi$ is a *-representation of $(\A, \Ao)$, then the {\em closure} $\widetilde{\pi}$ of $\pi$ is defined, for each
$x \in \A$, as the restriction of $\overline{\pi(x)}$ to the domain $\widetilde{\D_\pi}$, which is the completion
of $\D_\pi$ under the {\em graph topology} $t_\pi$ \cite{schmu} defined by the seminorms $\xi \in \D_\pi \to
\|\pi(a)\xi\|$, $a\in \A$. If $\pi=\widetilde{\pi}$ the representation is said to be {\em closed}.

The adjoint of a *-representation $\pi$ of a quasi *-algebra $(\A, \Ao)$ is defined as follows:

$$
\D_{\pi^*}  \equiv \bigcap_{x \in \A} \D(\pi(x)^* ) \text{ and } \pi^* (x) = \pi(x^* )^*\, \upharpoonright
\D_{\pi^*}, \quad x \in \A .$$

The representation $\pi$ is said to be {\em self-adjoint} if $\pi=\pi^*$.

The representation $\pi$ is said to be {\em ultra-cyclic} if there exists $\xi_0\in\D_\pi$ such that
$\D_\pi=\pi(\Ao)\xi_0$, while is said to be {\em cyclic} if there exists $\xi_0\in\D_\pi$ such that
$\pi(\Ao)\xi_0$ is dense in $\D_\pi$ w.r.t. $t_\pi$.

\begin{defn}

Let $\pi$ be a *-representation of $\A$. A subspace $\M\subset \D_\pi$ is said to be {{\em quasi-invariant}} for $\pi$ if
$\pi(\Ao)\M\subset\M$ and $\pi(\A)\M\subset\overline{\M}$, the closure of $\M$ in the Hilbert norm of $\Hil_\pi$.
Moreover the quasi-invariant subspace $\M$ is called {\em ultra-cyclic} if   there exists $\xi_0\in\M$ such
that $\M=\pi(\Ao)\xi_0$. $\M$ is called {\em cyclic} if  there exists $\xi_0\in\M$ such that
$\pi(\Ao)\xi_0$ is dense in $\M$ w.r.t. $t_\pi$.

\end{defn}

\begin{prop}
Let $\pi$ be a *-representation of $\A$ and $\M$ a quasi-invariant subspace of $\D_\pi$ for $\pi$. We put
$$
\left\{
\begin{array}{l}
\D_{\pi\up\M} := \M, \\
(\pi\up\M)(x):=\pi(x)\up\M,\quad x\in\A. \\
\end{array}
\right.
$$
Then $\pi\up\M$ is a *-representation of $\A$ with domain $\M$ in $\overline{\M}$. Let $\pi_\M$ denote the
closure of $\pi\up\M$. Then

(i) if $\M$ is ultra-cyclic then $\pi\up\M$ is ultra-cyclic and $\pi_\M$ is cyclic;

(ii) if $\M$ is cyclic then $\pi\up\M$  and $\pi_\M$ are cyclic.

\end{prop}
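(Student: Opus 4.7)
The proposition has two logically separate pieces: first, confirm that $\pi\up\M$ is genuinely a *-representation of $\A$ into the Hilbert space $\overline{\M}$; second, transport the (ultra-)cyclicity assumption on $\M$ to both $\pi\up\M$ and its closure $\pi_\M$. My plan is to isolate these two parts and tackle them in this order, since the second part is almost formal once the first is established.

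For the first part, I would start by showing that for every $a\in\A$, the operator $(\pi\up\M)(a)=\pi(a)\up\M$ belongs to $\LL^\dagger(\M,\overline{\M})$. Quasi-invariance gives $\pi(a)\M\subseteq\overline{\M}$, so the range sits in $\overline{\M}$, and closability is inherited from $\pi(a)\in\LL^\dagger(\D_\pi,\Hil_\pi)$. The only delicate point is the adjoint-domain condition: for $\eta\in\M\subseteq\D_\pi$ one has $\pi(a)^*\eta=\pi(a^*)\eta$, and by quasi-invariance $\pi(a^*)\eta\in\overline{\M}$, so the functional $\xi\mapsto\langle\pi(a)\xi|\eta\rangle$ on $\M$ is bounded and represented by an element of $\overline{\M}$. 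This shows $\M\subseteq\D((\pi(a)\up\M)^*)$ and yields the clean identity $(\pi(a)\up\M)^{\dagger}=\pi(a^*)\up\M$, which is exactly the involution axiom for the representation. The partial-product axiom then reduces to checking that, for $x\in\Ao$, $(\pi\up\M)(x)$ maps $\M$ into $\M$ (which is part of quasi-invariance) and that the identity $\pi(a)\mult\pi(x)=\pi(ax)$, valid for $\pi$, restricts correctly to $\M$; this is a direct unpacking of the weak product, using that restriction commutes with the adjoint computed above.

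For the second part, if $\M=\pi(\Ao)\xi_0$ then trivially $\D_{\pi\up\M}=(\pi\up\M)(\Ao)\xi_0$, so $\pi\up\M$ is ultra-cyclic with the same vector $\xi_0$. Passing to the closure, the completion $\widetilde{\D_{\pi\up\M}}$ contains $\M$ as a $t_{\pi\up\M}$-dense subset by construction, so $\pi_\M$ is cyclic (not necessarily ultra-cyclic, since ultra-cyclicity requires the algebraic equality $\widetilde{\D_{\pi\up\M}}=(\pi\up\M)(\Ao)\xi_0$, which need not survive the completion). For (ii), if $\pi(\Ao)\xi_0$ is $t_\pi$-dense in $\M$, I note that the seminorms $\xi\mapsto\|\pi(a)\xi\|$ for $\xi\in\M$ coincide with the defining seminorms of $t_{\pi\up\M}$, so the same set is $t_{\pi\up\M}$-dense in $\M$; density is preserved when $\M$ is embedded into its $t_{\pi\up\M}$-completion, yielding cyclicity of both $\pi\up\M$ and $\pi_\M$.

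The main obstacle I anticipate is the adjoint-domain verification in Step 1, because the adjoint must be taken relative to the smaller ambient Hilbert space $\overline{\M}$ rather than $\Hil_\pi$. The argument works cleanly because $\pi(a^*)\eta$ already sits inside $\overline{\M}$, so no orthogonal projection onto $\overline{\M}$ changes the value; once this is recognized, all remaining steps are essentially bookkeeping with the definitions of $t_\pi$, $t_{\pi\up\M}$, the weak product $\mult$, and the closure operation on *-representations.
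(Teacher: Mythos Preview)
The paper states this proposition without proof, treating it as a routine verification from the definitions. Your proposed argument is correct and supplies exactly the details one would expect: the only non-formal point is that the adjoint of $\pi(a)\up\M$ computed in the smaller Hilbert space $\overline{\M}$ coincides with $\pi(a^*)\up\M$, and you identify and resolve this correctly via quasi-invariance (which forces $\pi(a^*)\eta\in\overline{\M}$ for $\eta\in\M$). The cyclicity claims (i) and (ii) then follow, as you note, from the fact that the seminorms defining $t_\pi$ and $t_{\pi\up\M}$ agree on $\M$, together with the definition of the closure.
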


In the sequel we will also need the following definitions:

\begin{defn}
Let $\rho$ and $\pi$ be *-representations of $\A$ respectively on $\D_\rho\subset\Hil_\rho$ and $\D_\pi\subset\Hil_\pi$. Then $\rho$ and $\pi$ are unitarily equivalent if there exists
a unitary operator $U:\Hil_\rho\rightarrow\Hil_\pi$ such that $U\D_\rho=\D_\pi$ and $\rho(x)=U^*\pi(x)U$, for all
$x\in\A$.

\end{defn}

\begin{defn}
Let $\pi$ be a *-representation of $\A$. Then $\pi'$ is a sub *-representation of $\pi$ if and only if
$\pi'=\pi\up\M$, for a certain quasi-invariant subspace $\M$ of $\D_\pi$. Furthermore $\pi'$ is a closed sub
*-representation of $\pi$ if and only if  $\pi'=\pi_\M$, for a certain quasi-invariant subspace $\M$ of $\D_\pi$.

\end{defn}

The following proposition, proved by one of us in \cite{ct_ban}, extends the GNS construction to quasi *-algebras.

\begin{prop}

Let $\omega$ be a linear functional on $\A$ satisfying the following requirements:

(L1) $\omega(a^*a)\geq 0$ for all $a\in\Ao$;

(L2) $\omega(b^*x^*a)=\overline{\omega(a^*xb)}$,
$\forall\,a,b\in\Ao$, $x\in\A$;

(L3) $\forall x\in\A$ there exists $\gamma_x>0$ such that
$|\omega(x^*a)|\leq \gamma_x\,\omega(a^*a)^{1/2}$.

Then there exists a triple $(\pi_{\omega}, \lambda_{\omega}, \Hil_{\omega})$ such that

  $\bullet$ $\pi_{\omega}$ is a ultra-cyclic *-representation of $\A$ with ultra-cyclic vector $\xi_\omega$;

  $\bullet$ $\lambda_{\omega}$ is a linear map of $\A$ into
  $\Hil_{\omega}$ with $\lambda_{\omega}(\Ao)=\D_{\pi_\omega}$, $\xi_\omega=\lambda_{\omega}(e)$ and
  $\pi_{\omega}(x)\lambda_{\omega}(a)=\lambda_{\omega}(xa)$, for every $x \in\A,\, a \in \Ao$;

  $\bullet$ $\omega(x)=\ip{\pi_{\omega}(x)\xi_\omega}{\xi_\omega}$,
  for every $x \in \A$.
\end{prop}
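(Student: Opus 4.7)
The plan is to perform the GNS construction first on the $^*$-algebra $\Ao$, using (L1) and (L2), to produce a pre-Hilbert space and an ultra-cyclic $^*$-representation of $\Ao$, and then to extend it to all of $\A$ via the boundedness estimate (L3) and the Riesz representation theorem.

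First, on $\Ao$ I consider the sesquilinear form $(a,b)\mapsto \omega(b^*a)$. By (L1) it is positive semi-definite, so Cauchy--Schwarz holds and $\Nc_\omega := \{a \in \Ao : \omega(a^*a)=0\}$ is a left ideal of $\Ao$. I pass to the quotient $\Ao/\Nc_\omega$, denote by $\lambda_\omega(a)$ the class of $a \in \Ao$, take the Hilbert completion $\Hil_\omega$, and set $\xi_\omega:= \lambda_\omega(e)$ and $\D_{\pi_\omega}:=\lambda_\omega(\Ao)$.

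Next, I extend $\lambda_\omega$ from $\Ao$ to $\A$. Fix $x \in \A$; the map $a\mapsto \omega(a^*x)$ is conjugate-linear on $\Ao$, and by (L2) (with $b=e$) equals $\overline{\omega(x^*a)}$, which by (L3) is bounded by $\gamma_x\,\omega(a^*a)^{1/2}=\gamma_x\|\lambda_\omega(a)\|$. Hence it factors through $\lambda_\omega$ and extends to a bounded antilinear functional on $\Hil_\omega$; Riesz yields a unique vector $\lambda_\omega(x) \in \Hil_\omega$ with
\[
\ip{\lambda_\omega(x)}{\lambda_\omega(a)} = \omega(a^*x), \qquad a \in \Ao.
\]
For $x \in \A$ and $a \in \Ao$ I then set $\pi_\omega(x)\lambda_\omega(a):= \lambda_\omega(xa)$. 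Well-definedness of this operator on $\D_{\pi_\omega}$ reduces to showing $\lambda_\omega(a)=0\Rightarrow\lambda_\omega(xa)=0$: for any $y \in \Ao$ one has $x^*y \in \A$, so (L3) applied with $x$ replaced by $x^*y$ gives $|\omega(y^*xa)|\leq \gamma_{x^*y}\|\lambda_\omega(a)\|=0$, whence $\ip{\lambda_\omega(xa)}{\lambda_\omega(y)}=0$ for every $y \in \Ao$ and therefore $\lambda_\omega(xa)=0$. In particular $\pi_\omega(x)\xi_\omega=\lambda_\omega(xe)=\lambda_\omega(x)$, so the reproducing identity follows at once: $\omega(x)=\omega(e^*x)=\ip{\lambda_\omega(x)}{\xi_\omega}=\ip{\pi_\omega(x)\xi_\omega}{\xi_\omega}$.

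It remains to verify the $^*$-representation axioms. Axiom (i), $\pi_\omega(x^*)=\pi_\omega(x)^\dagger$, follows from (L2): for $a,b\in\Ao$,
\[
\ip{\pi_\omega(x)\lambda_\omega(b)}{\lambda_\omega(a)}=\omega(a^*xb)=\overline{\omega(b^*x^*a)}=\ip{\lambda_\omega(b)}{\pi_\omega(x^*)\lambda_\omega(a)},
\]
which in particular gives $\pi_\omega(x)\in\Lc^\dagger(\D_{\pi_\omega},\Hil_\omega)$. Axiom (ii), $\pi_\omega(ax)=\pi_\omega(a)\mult\pi_\omega(x)$ for $a\in\A$, $x\in\Ao$, is a consequence of the associativity axiom (ii) in the definition of quasi $^*$-algebra together with the fact that $\pi_\omega(\Ao)\D_{\pi_\omega}\subset\D_{\pi_\omega}$: $\pi_\omega(a)\lambda_\omega(xb)=\lambda_\omega(a(xb))=\lambda_\omega((ax)b)=\pi_\omega(ax)\lambda_\omega(b)$. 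The main technical step is the extension via Riesz: one has to check that the vectors $\lambda_\omega(x)$ produced in this way are compatible with the quotient by $\Nc_\omega$ and with the module structure of $\A$ over $\Ao$. Conditions (L2) and (L3) are exactly what is needed for these compatibilities; once they are in place the remaining claims follow by direct manipulation of the axioms.
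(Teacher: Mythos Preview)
Your argument is correct and is precisely the standard adaptation of the GNS construction to the quasi $^*$-algebra setting: build the pre-Hilbert space from $(\Ao,\omega_0)$ via (L1), use (L3) together with Riesz to define $\lambda_\omega$ on all of $\A$, and then read off the $^*$-representation properties from (L2) and the associativity axioms. The only places where a referee might ask for one more line are the verification that $\pi_\omega(a)\mult\pi_\omega(x)$ is genuinely well defined as a weak product (i.e.\ that $\pi_\omega(a)^*\D_{\pi_\omega}\subset \D(\pi_\omega(x)^*)$ for $a\in\A$, $x\in\Ao$), and the linearity of the Riesz-extended $\lambda_\omega$; both follow immediately from (L2) and the uniqueness in Riesz, so your closing sentence covers them.

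As for the comparison with the paper: the paper does not supply its own proof of this proposition at all, but simply quotes the result from \cite{ct_ban}. Your construction is exactly the one carried out there, so there is no methodological difference to report.
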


The representation $\pi_\omega$ satisfies the  properties: (1) $\pi_{\omega_0}=\pi_\omega\upharpoonright_{\Ao}$;
(2) $\pi_\omega(x)\lambda_\omega(a)=\lambda_\omega(xa)$, $x\in \A, \, a\in \Ao$ and
(3) $\pi_\omega^*(a)\lambda_\omega(x)=\lambda_\omega(ax)$, $x\in \A, \, a\in
  \Ao$. Here $\pi_\omega^*$ denotes the adjoint representation of $\pi$, see \cite{schmu, ait_book}.

For shortness, a linear functional $\omega$ on $\A$ satisfying (L1)-(L3) will be called a {\em representable}
functional on $\A$. If $\omega$ is representable, $(\pi_\omega, \lambda_\omega, \Hil_\omega)$ will be called, as
usual, the GNS construction for $\omega$.

\vspace{1mm}

It is possible to check that conditions (L1)-(L3) are stable under the map $\omega\rightarrow \omega_b$, with $b
\in\Ao$. This means that, if $\omega$ is representable, then $\omega_b$ is representable, for every $b \in \Ao$.
We only prove (L3) since (L1) and (L2) are trivial. We have
$$
\left|\omega_b(x^*a)\right|=\left|\omega((xb)^*)ab\right|\leq
\gamma_{xb}\omega((ab)^*ab)^{1/2}=\gamma_{xb}\omega_b(a^*a)^{1/2}.
$$
Hence $\omega_b$ produces a GNS representation as well, so
that it is worth comparing the two representations arising from
$\omega$ and $\omega_b$, in view of extending to  quasi *-algebras what we discussed in the first section for C*-algebras.

We start with considering
the following question: {\em when a representable linear functional $\omega'$ can be written as
$\omega'=\omega_b$, for some $b\in\Ao$?} To answer this question we  give the following

\begin{prop}
Let $\omega'$ and $\omega$ be  representable linear functionals on $\A$. Then $\omega'=\omega_b$ for some
$b\in\Ao$ if and only if $\pi_{\omega'}$ is unitarily equivalent to a sub *-representation of $\pi_\omega$.

\end{prop}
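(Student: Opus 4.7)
The plan is to prove the two implications separately; both are variants of the GNS uniqueness argument adapted to the quasi *-algebra setting.

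For the ``if'' direction, I will use the ultra-cyclicity of $\pi_\omega$ to extract the element $b$ from the given unitary equivalence. If $U:\Hil_{\omega'}\to\overline{\M}$ is a unitary with $U\D_{\pi_{\omega'}} = \M$, where $\M\subset\D_{\pi_\omega}$ is quasi-invariant, then $\eta := U\xi_{\omega'}$ lies in $\M\subset\D_{\pi_\omega} = \lambda_\omega(\Ao)$, so $\eta = \lambda_\omega(b)$ for some $b\in\Ao$. The identity $\omega' = \omega_b$ then follows from the chain
\[
\omega'(x) = \langle \pi_{\omega'}(x)\xi_{\omega'},\xi_{\omega'}\rangle = \langle \pi_\omega(x)\lambda_\omega(b),\lambda_\omega(b)\rangle = \langle \lambda_\omega(xb),\lambda_\omega(b)\rangle = \omega(b^*xb),
\]
which combines the intertwining of $U$, the identity $\pi_\omega(x)\lambda_\omega(b) = \lambda_\omega(xb)$, and the standard GNS relation $\langle \lambda_\omega(y),\lambda_\omega(a)\rangle = \omega(a^*y)$ for $y\in\A$, $a\in\Ao$.

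For the ``only if'' direction, assume $\omega' = \omega_b$ with $b\in\Ao$ and set $\M := \pi_\omega(\Ao)\lambda_\omega(b)\subset\D_{\pi_\omega}$. The candidate intertwiner $U_0:\D_{\pi_{\omega_b}}\to\M$ is defined on generators by $U_0\lambda_{\omega_b}(a) := \pi_\omega(a)\lambda_\omega(b) = \lambda_\omega(ab)$ for $a\in\Ao$. The isometry property
\[
\|U_0\lambda_{\omega_b}(a)\|^2 = \|\lambda_\omega(ab)\|^2 = \omega(b^*a^*ab) = \omega_b(a^*a) = \|\lambda_{\omega_b}(a)\|^2
\]
makes $U_0$ well-defined, and continuous extension yields an isometry $U:\Hil_{\omega_b}\to\overline{\M}$, unitary onto $\overline{\M}$. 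The required intertwining $U\pi_{\omega_b}(x) = \pi_\omega(x)U$ on $\D_{\pi_{\omega_b}}$ then follows, since both sides applied to $\lambda_{\omega_b}(a)$ equal $\lambda_\omega((xa)b)$, once $U$ is identified on $\lambda_{\omega_b}(xa)$ via the extension.

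The main technical obstacle is verifying that $\M$ is quasi-invariant for $\pi_\omega$. The condition $\pi_\omega(\Ao)\M\subset\M$ is immediate from $\pi_\omega(x)\pi_\omega(a)\lambda_\omega(b) = \pi_\omega(xa)\lambda_\omega(b)$, but the further requirement $\pi_\omega(\A)\M\subset\overline{\M}$ asks that $\pi_\omega(y)\lambda_\omega(ab) = \lambda_\omega((ya)b)$ lie in $\overline{\M}$ for every $y\in\A$ and $a\in\Ao$. The plan is to exploit the density of $\lambda_{\omega_b}(\Ao)$ in $\Hil_{\omega_b}$: choose $c_n\in\Ao$ with $\lambda_{\omega_b}(c_n)\to\lambda_{\omega_b}(ya)$, transfer this via the isometry $U_0$ to obtain a Cauchy sequence $\lambda_\omega(c_nb)\in\M$ with limit $\xi^*\in\overline{\M}$, and identify $\xi^*$ with $\lambda_\omega((ya)b)$ by pairing against $\lambda_\omega(db)$ for $d\in\Ao$ and using the chain $\omega((db)^*c_nb) = \omega_b(d^*c_n)\to\omega_b(d^*ya) = \omega((db)^*(ya)b)$.
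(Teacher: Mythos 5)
Your proof follows essentially the same route as the paper's: in one direction you extract $b$ from $U\xi_{\omega'}\in\M\subset\lambda_\omega(\Ao)$ exactly as the authors do, and in the other you build the unitary on $\M=\pi_\omega(\Ao)\lambda_\omega(b)$ from the isometry $\lambda_{\omega_b}(a)\mapsto\lambda_\omega(ab)$, which is precisely the content of the two displayed identities for $\omega_b(c^*xa)$ in the paper.

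One remark on the step you rightly single out as the main obstacle, namely $\pi_\omega(\A)\M\subset\overline{\M}$. Your identification of the limit $\xi^*$ of $\lambda_\omega(c_nb)$ with $\lambda_\omega((ya)b)$ is carried out by pairing only against vectors $\lambda_\omega(db)$ with $d\in\Ao$, i.e.\ against $\M$ itself. That determines only the component of $\lambda_\omega((ya)b)$ along $\overline{\M}$, so it yields $P_{\overline{\M}}\,\lambda_\omega((ya)b)=\xi^*$ rather than $\lambda_\omega((ya)b)=\xi^*$ --- and the latter is exactly the membership in $\overline{\M}$ you are trying to establish. To close the gap you would need the pairings against all of $\lambda_\omega(\Ao)$, i.e.\ $\omega(d^*c_nb)\to\omega(d^*(ya)b)$ for arbitrary $d\in\Ao$; this does not follow from $\lambda_{\omega_b}(c_n)\to\lambda_{\omega_b}(ya)$, which only controls pairings against elements of the form $fb$. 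In fairness, the paper's own proof makes the same move: it deduces $\pi_\omega(x)\upharpoonright\M=U^*\pi_{\omega_b}(x)U\upharpoonright\M$ from equality of matrix elements over $\M\times\M$, which already presupposes $\pi_\omega(x)\M\subset\overline{\M}$. So your argument is no less complete than the published one, but the quasi-invariance of $\M$ is not actually secured by the density-transfer as written; it does become automatic in the locally convex setting treated later in the paper (via the lemma asserting $\lambda_\omega(x_\alpha)\to\lambda_\omega(x)$ when $x_\alpha\to x$ in $\tau$, combined with continuity of the multiplications), which is where the proposition is ultimately applied.
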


\begin{proof}
Suppose first that $\omega'=\omega_b$ for some $b\in\Ao$. For every $x \in \A$ and $a,c \in \Ao$, we have
\begin{equation}\label{U1}
\omega_b(c^*xa)=\ip{\pi_{\omega_b}(x)\lambda_{\omega_b}(a)}{\lambda_{\omega_b}(c)}.\end{equation} On the other
hand,
\begin{equation}\label{U2}\omega_b(c^*xa)= \omega(b^*c^*xab)=
\ip{\pi_\omega(x)\pi_\omega(a)\lambda_{\omega}(b)}{\pi_\omega(c)\lambda_{\omega}(b)}.\end{equation}

Now put $\Hil_\omega^b := \overline{\pi_\omega(\Ao)\lambda_{\omega}(b)}$. Then, from  equality
\eqref{U1},  it follows that there exists a unitary operator $U:\Hil_\omega^b \to\Hil_{\omega_b}$ such that
$$U\pi_\omega(a)\lambda_{\omega}(b)= \lambda_{\omega_b}(a), \quad \forall a\in \Ao.$$

\noindent
>From \eqref{U2} we deduce that, for every $a \in \A$ and $a,c \in\Ao$,
\begin{eqnarray*}\ip{\pi_\omega(x)\pi_\omega(a)\lambda_{\omega}(b)}{\pi_\omega(c)\lambda_{\omega}(b)} &=&
\ip{\pi_{\omega_b}(x)\lambda_{\omega_b}(a)}{\lambda_{\omega_b}(c)} \\
&=&
\ip{\pi_{\omega_b}(x)U\pi_\omega (a)\lambda_{\omega}(b)}{U\pi_\omega (c)\lambda_{\omega}(b)} \\
&=& \ip{U^*\pi_{\omega_b}(x)U\pi_\omega(a)\lambda_{\omega}(b)}{\pi_\omega(c)\lambda_{\omega}(b)}.
\end{eqnarray*}
This implies that
$$\pi_\omega^b(x):= \pi_\omega(x)_{\upharpoonright \pi_\omega(\Ao)\lambda_{\omega}(b) }= U^*\pi_{\omega_b}(x)U
\upharpoonright \pi_\omega(\Ao)\lambda_{\omega}(b).$$ Hence, ${\pi_\omega(\Ao)\lambda_{\omega}(b)}$ is a quasi-invariant
subspace for $\pi_\omega$, that is, $\pi_\omega(\A){\pi_\omega(\Ao)\lambda_{\omega}(b)}\subseteq
\overline{\pi_\omega(\Ao)\lambda_{\omega}(b)}$ and so $\pi_\omega^b$ is a sub *-representation of $\pi_\omega$
with ultra-cyclic vector $\lambda_{\omega}(b)$, and it is unitarily equivalent to $\pi_{\omega_b}$.

Conversely, suppose that $\pi_{\omega'}$ is unitarily equivalent to a sub *-representation of $\pi_\omega$. Then there
exists a quasi-invariant subspace $\M$ of $\D_{\pi_\omega}$, and a unitary operator $U: \Hil_{\omega'}\rightarrow
\overline{\M}\subset \Hil_\omega$  such that
$U\lambda_{\omega'}(\Ao)=\M\subset\lambda_{\omega}(\Ao)=\D_{\pi_\omega}$ and
$\pi_{\omega'}(x)=U^*(\pi\up\M)(x)U$, $\forall x\in\A$. Since
$U\lambda_{\omega'}(e)\in\M\subset\lambda_{\omega}(\Ao)$, then there exists $b\in\Ao$ such that
$U\lambda_{\omega'}(e)=\lambda_{\omega}(b)$. Thus, for every $x\in\A$,
\begin{eqnarray*}\omega'(x)=\ip{\pi_{\omega'}(x)\lambda_{\omega'}(e)}{\lambda_{\omega'}(e)} &=&
\ip{\pi_{\omega}(x)U\lambda_{\omega'}(e)}{U\lambda_{\omega'}(e)} \\&=&
\ip{\pi_{\omega}(x)\lambda_{\omega}(b)}{\lambda_{\omega}(b)} = \omega_b(x).
\end{eqnarray*}

\end{proof}

We now consider a slightly generalized problem, looking for conditions under which a representable linear functional $\omega'$ on $\A$ can be written as
$\omega'=\lim_\alpha\omega_{b_\alpha}$ for some net $\{b_\alpha\}$ in $\Ao$.

\begin{prop}
Let $\omega'$ and $\omega$ be  representable linear functionals on $\A$. Then $\omega'=\lim_{\alpha}\omega_{b_\alpha}$ for
some net $\{b_\alpha\}$ in $\Ao$ such that $\{\pi_\omega(b_\alpha)\xi_\omega\}$ converges w.r. to
$t_{\pi_\omega}$ if, and only if,  $\pi_{\omega'}$ is unitarily equivalent to a sub *-representation of
$\tilde\pi_\omega$.

\end{prop}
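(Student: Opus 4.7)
The plan is to mirror the proof of Proposition 6, with the vector $\lambda_\omega(b) \in \D_{\pi_\omega}$ replaced by the limit $\eta := \lim_\alpha \lambda_\omega(b_\alpha)$, which lives in the $t_{\pi_\omega}$-completion $\widetilde{\D}_{\pi_\omega}$ -- the domain of the closed representation $\tilde\pi_\omega$ -- but no longer in $\D_{\pi_\omega}$ itself. The bridge between the two statements is the vector functional identity $\omega'(x) = \langle \tilde\pi_\omega(x)\eta, \eta\rangle$, $x \in \A$.

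For the ``only if'' direction, take such an $\eta$ and use $t_{\pi_\omega}$-continuity of $\tilde\pi_\omega(y)$ into $\Hil_\omega$ (for every $y \in \A$) to pass to the limit in
\begin{equation*}
\omega_{b_\alpha}(c^*xa) = \langle \pi_\omega(x)\pi_\omega(a)\lambda_\omega(b_\alpha), \pi_\omega(c)\lambda_\omega(b_\alpha)\rangle, \qquad x \in \A,\ a, c \in \Ao,
\end{equation*}
obtaining $\omega'(c^*xa) = \langle \tilde\pi_\omega(x)\tilde\pi_\omega(a)\eta, \tilde\pi_\omega(c)\eta\rangle$. Specializing to $x=e$, $c=a$ gives $\|\tilde\pi_\omega(a)\eta\|^2 = \|\lambda_{\omega'}(a)\|^2$, so the assignment $V\lambda_{\omega'}(a) := \tilde\pi_\omega(a)\eta$ defines a densely defined isometry that extends to a unitary $V: \Hil_{\omega'} \to \overline{\M}$, where $\M := \tilde\pi_\omega(\Ao)\eta$. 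Matching the inner products computed above forces $V\lambda_{\omega'}(xa) = \tilde\pi_\omega(x)\tilde\pi_\omega(a)\eta$ for every $x \in \A$, $a \in \Ao$, whence $\tilde\pi_\omega(\A)\M \subset \overline{\M}$ and the intertwining $\pi_{\omega'}(x) = V^*(\tilde\pi_\omega\up\M)(x)V$ holds; the remaining inclusion $\tilde\pi_\omega(\Ao)\M \subset \M$ reduces to showing that $\{\lambda_\omega(ab_\alpha)\}$ is $t_{\pi_\omega}$-Cauchy, which in turn reduces, via axiom (ii) of a quasi *-algebra, to the Cauchyness of $\{\lambda_\omega(b_\alpha)\}$.

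Conversely, suppose $V: \Hil_{\omega'} \to \overline{\M}$ is a unitary implementing the equivalence of $\pi_{\omega'}$ with $\tilde\pi_\omega\up\M$ for some quasi-invariant subspace $\M \subset \widetilde{\D}_{\pi_\omega}$. Then $V\xi_{\omega'} \in \M \subset \widetilde{\D}_{\pi_\omega}$, and since $\D_{\pi_\omega} = \lambda_\omega(\Ao)$ is by definition $t_{\pi_\omega}$-dense in $\widetilde{\D}_{\pi_\omega}$, there is a net $\{b_\alpha\} \subset \Ao$ with $\lambda_\omega(b_\alpha) \to V\xi_{\omega'}$ in $t_{\pi_\omega}$. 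Since $t_{\pi_\omega}$-convergence entails norm convergence of $\pi_\omega(x)\lambda_\omega(b_\alpha) \to \tilde\pi_\omega(x)V\xi_{\omega'}$ for every $x \in \A$, one gets
\begin{equation*}
\omega_{b_\alpha}(x) = \langle \pi_\omega(x)\lambda_\omega(b_\alpha), \lambda_\omega(b_\alpha)\rangle \longrightarrow \langle \tilde\pi_\omega(x)V\xi_{\omega'}, V\xi_{\omega'}\rangle = \omega'(x),
\end{equation*}
which is the desired approximation. The main obstacle I foresee is the verification of quasi-invariance of $\M=\tilde\pi_\omega(\Ao)\eta$ in the first direction, and more generally the careful use of $t_{\pi_\omega}$-continuity of $\tilde\pi_\omega$ to justify all the limit exchanges; everything else is parallel to the $\omega' = \omega_b$ case of Proposition 6.
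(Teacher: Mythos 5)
Your proposal is correct and follows essentially the same route as the paper's proof: the vector $\eta$ is the paper's $\xi_0=t_{\pi_\omega}\mbox{-}\lim_\alpha\pi_\omega(b_\alpha)\xi_\omega$, the subspace $\M=\tilde\pi_\omega(\Ao)\eta$ and the unitary defined by $\lambda_{\omega'}(a)\mapsto\tilde\pi_\omega(a)\eta$ are exactly those of the paper (up to the direction of the unitary), and the converse is handled identically by $t_{\pi_\omega}$-approximating $U\lambda_{\omega'}(e)$ from $\lambda_\omega(\Ao)$. Your verification that $\{\lambda_\omega(ab_\alpha)\}$ is $t_{\pi_\omega}$-Cauchy supplies a detail the paper dismisses as ``easily shown,'' but it is not a different argument.
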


\begin{proof} Suppose that $\omega'=\lim_\alpha\omega_{b_\alpha}$, for some net $\{b_\alpha\}$ in $\Ao$ such that
$\{\pi_\omega(b_\alpha)\xi_\omega\}$ converges w.r. to $t_{\pi_\omega}$. Then, it is easily shown that $\M :=
\tilde{\pi}_\omega(\Ao)\xi_0$ is a quasi-invariant subspace of $\D_{\tilde{\pi}_\omega}$, where $\xi_0:=
t_{\pi_\omega}-\lim_\alpha \pi_\omega(b_\alpha)\xi_\omega$. For every $x \in \A$ and every $a,c \in \Ao$, we have
\begin{eqnarray}
\ip{\pi_{\omega'}(x)\lambda_{\omega'}(a)}{\lambda_{\omega'}(c)}&=& \omega' (c^*xa)\nonumber\\
&=& \lim_\alpha \omega(b_\alpha^*c^*xab_\alpha)\nonumber\\
&=& \lim_\alpha\ip{\pi_\omega(xa)\lambda_\omega(b_\alpha)}{\pi_\omega(c)\lambda_\omega(b_\alpha)}\nonumber\\
&=& \ip{\tilde{\pi}_\omega(xa)\xi_0}{\tilde{\pi}_\omega(c)\xi_0}\nonumber\\
&=& \ip{(\tilde{\pi}_\omega\up\M)(x)\tilde{\pi}_\omega(a)\xi_0}{\tilde{\pi}_\omega(c)\xi_0}. \label{ex1}
\end{eqnarray}
Here we put
$$
U\tilde\pi_\omega(a)\xi_0=\lambda_{\omega'}(a), \quad a\in\Ao.
$$
Then $U$ extends to a unitary operator of $\overline{\M}$ onto $\Hil_{\omega'}$, which we denote with the
same symbol, such that $U\M=\lambda_{\omega'}(\Ao)=\D_{\pi_{\omega'}}$. Furthermore, by (\ref{ex1}), we have

\begin{eqnarray*}\ip{\pi_{\omega'}(x)\lambda_{\omega'}(a)}{\lambda_{\omega'}(c)} &=&
\ip{(\tilde\pi_{\omega}\up\M)(x)\tilde\pi_{\omega}(a)\xi_0}{\tilde\pi_{\omega}(c)\xi_0} \\&=&
\ip{(\tilde\pi_{\omega}\up\M)(x)U^*\lambda_{\omega'}(a)}{U^*\lambda_{\omega'}(c)}\\ &=&
\ip{U(\tilde\pi_{\omega}\up\M)(x)U^*\lambda_{\omega'}(a)}{\lambda_{\omega'}(c)},
\end{eqnarray*}
for each $a,c\in\Ao$ and $x\in\A$, which implies that
$$
\pi_{\omega'}(x)=U(\tilde\pi_{\omega}\up\M)(x)U^*, \qquad \forall x\in\A.
$$
Thus $\pi_{\omega'}$ is unitarily equivalent to a sub *-representation $\tilde\pi_\omega\up\M$ of
$\tilde\pi_\omega$. Conversely, suppose $\pi_{\omega'}$ is unitarily equivalent to a sub *-representation of $\tilde\pi_\omega$. Then, there exists a quasi-invariant subspace of $\D_{\tilde\pi_\omega}$, $\M$, and a unitary
operator $U: \Hil_{\omega'}\rightarrow \overline{\M}$ such that
$U\lambda_{\omega'}(\Ao)=\M\subset\D_{\tilde\pi_\omega}$ and $\pi_{\omega'(x)}=U^*(\pi_\omega\up\M)(x)U$,
$\forall x\in\A$. Since $U\lambda_{\omega'}(e)\in\M\subset\D_{\tilde\pi_\omega}$, there exists
$\{b_\alpha\}\subset\Ao$ such that $\lambda_{\omega}(b_\alpha)=\pi_\omega(b_\alpha)\xi_\omega\rightarrow
U\lambda_{\omega'}(e)$, in the  topology $t_{\pi_\omega}$. Hence,
\begin{eqnarray*}
\omega' (x) &=&  \ip{\pi_{\omega'}(x)\lambda_{\omega'}(e)}{\lambda_{\omega'}(e)}\\
&=& \ip{\pi_\omega(x)U\lambda_{\omega'}(e)}{U\lambda_{\omega'}(e)}\\
&=& \lim_\alpha \ip{\pi_\omega(x)\lambda_{\omega}(b_\alpha)}{\lambda_{\omega}(b_\alpha)}  \\
&=& \lim_\alpha \omega_{b_\alpha}(x),
\end{eqnarray*}
for every $x\in\A$.

\end{proof}

The previous propositions, and in particular Proposition 6, show that, for every $b\in\Ao$ such that
$\omega(b^*b)\neq0$,
 $\omega$ and
$\omega_b$ produce {\em close} GNS representations and the same
physical considerations given in Section I can also be repeated here, with
no major change. In particular we consider now some consequences of our results on the theory of spatial derivations in  the quasi *-algebraic setting discussed in \cite{bit1,bit2}.
To keep the paper self-contained, let us first recall few definitions. Let $(\A,\Ao)$ be a quasi *-algebra. A {\em
*-derivation of}  $\Ao$ is a map $\delta: \Ao\rightarrow \A$
 with the following properties:
\begin{itemize}
\item[(i)]  $\delta(a^*)=\delta(a)^*, \; \forall a \in \Ao$;
\item[(ii)] $\delta(\alpha a+\beta b) = \alpha \delta( a)+\beta\delta( b), \; \forall a,b
 \in \Ao, \forall \alpha,\beta \in \mathbb{C}$;
\item [(iii)] $\delta(ab) = a\delta( b)+\delta( a)b,  \; \forall a,b \in \Ao$.
\end{itemize}
Further, let $\pi$ be a
*-representation of  $(\A,\Ao)$. As in \cite{bit1} we will always assume that
whenever $a\in \Ao$ is such that $\pi(a)=0$, then $\pi(\delta(a))=0$ as
well. Under this assumption, the linear map $
\delta_\pi(\pi(a))=\pi(\delta(a)), \quad a\in \Ao$, is well-defined
on $\pi(\Ao)$ with values in $\pi(\A)$ and it is a
*-derivation of $\pi(\Ao)$. We call $\delta_\pi$ the *-derivation
{\em induced} by $\pi$. Given such a representation $\pi$  and its
dense domain $\D_\pi$, we consider the usual graph topology
$t_\dagger$ generated by the seminorms $ \xi\in\D_\pi \rightarrow
\|A\xi\|, \quad A\in \Lc^\dagger(\D_\pi)$.

If $\D_\pi'$ denotes the conjugate dual of $\D_\pi$, we get the usual rigged Hilbert space $\D_\pi[t_\dagger]
\subset \Hil_\pi  \subset \D_\pi'[t_\dagger']$, where $t_\dagger'$ is the strong dual topology of $\D_\pi'$. As
usual, we denote by $\Lc(\D_\pi,\D_\pi')$ the space of all continuous linear maps from $\D_\pi[t_\dagger]$ into
$\D_\pi'[t_\dagger']$. In this case, $\Lc^\dagger(\D_\pi)\subset \Lc(\D_\pi,\D_\pi')$. Each operator $A\in
\Lc^\dagger(\D_\pi)$ can be extended to the whole $\D_\pi'$ by putting
$$
<\hat A\xi',\eta>=<\xi',A^\dagger \eta>, \quad \forall \xi'\in
\D_\pi', \quad \eta\in \D_\pi,
$$
where $<\cdot, \cdot>$ denotes the form which puts $\D_\pi$ and $\D_\pi'$ in conjugate duality. Hence the
multiplication of $X\in\Lc(\D_\pi,\D_\pi')$ and $A\in\Lc^\dagger(\D_\pi)$ can always be defined. Indeed,
\cite{bit1}, $(X\circ A)\xi=X(A\xi), \mbox{ and } (A\circ X)\xi=\hat A(X\xi)$, $\forall \xi\in \D_\pi$.
{ With these definitions, however, $(\Lc(\D_\pi,\D_\pi'),\Lc^\dagger(\D_\pi))$ may fail to be a quasi
*-algebra, since the operator $X\circ A$ need not be continuous from $\D_\pi[t_\dagger]$ into
$\D_\pi'[t_\dagger']$, unless some additional condition, like the reflexivity of $\D_\pi[t_\dagger]$, is
fulfilled. From now on, we will assume that $\D_\pi[t_\dagger]$ is a reflexive space. This assumption (which was
missing in \cite{bit1}) even though restrictive, is fulfilled in most of the physical models considered so far,
\cite{fbrev}.

}

Given a derivation $\delta$ of $(\A,\Ao)$ and  a *-representation $\pi$ of $(\A,\Ao)$, that we suppose to be
cyclic with cyclic vector $\xi_0$, the induced derivation $\delta_\pi$ is spatial if there exists {
$H_\pi=H_\pi^\dagger\in  \Lc(\D_\pi,\D_\pi')$} such that $H_\pi\xi_0\in \Hil_\pi$ and $$
\delta_\pi(\pi(x))=i\{H_\pi\circ\pi(x)-\pi(x)\circ H_\pi\},\quad\forall x\in\Ao.$$

Let $(\A, \Ao)$ be a locally convex quasi *-algebra with locally convex topology $\tau$. In \cite{bit1} we have found necessary and sufficient
conditions  for an induced derivation to be spatial. One of these
conditions is the following:

\noindent{\em there exists a positive linear functional $f$ on $\Ao$ such that:
\begin{equation}
f(a^*a)\leq p(a)^2, \quad \forall a\in \Ao, \label{21}
\end{equation}
for some continuous seminorm $p$ of $\tau$ and, denoting with
$\tilde f$ the continuous extension of $f$ to  $\A$, the following
inequality holds:
\begin{equation}
|\tilde f(\delta(a))|\leq C(\sqrt{f(a^*a)}+\sqrt{f(aa^*)}), \quad
\forall a\in \Ao, \label{22}
\end{equation}
for some positive constant $C$.}

Suppose now that $\omega_0$ is a positive linear representable functional on $\Ao$ satisfying condition \eqref{21}. Let
$\omega:=\widetilde{\omega_0}$ be the continuous extension of $\omega_0$ to $\A$, that is
 $$\omega(x)=\lim_\alpha \omega_0(a_\alpha), \quad x \in \A,$$
 where ${a_\alpha}$ is a net in $\Ao$ which converges to $x$ w. r. to $\tau$. Then $\omega$
 automatically satisfies conditions (L1), (L2) and (L3). Indeed, (L1) is clear since
$\omega_0$ is positive by assumption. As for (L2), let $x\in\A$ and $\{x_\alpha\}\subset\Ao$ be a net $\tau$-converging
to $x$. Since $\omega_0$ is hermitian we have $\omega_0(b^*x_\alpha^*a)=\overline{\omega_0(a^*x_\alpha b)}$, for all
$a,b\in\Ao$. Because of (\ref{21}), taking the limit on $\alpha$ of this equality we get (L2). To prove (L3) we
first use the Schwarz inequality on $\Ao$: $|\omega_0(x_\alpha a)|\leq \omega_0(x_\alpha^*
x_\alpha)^{1/2}\,\omega_0(a^* a)^{1/2}$. But $\omega_0(x_\alpha^* x_\alpha)^{1/2}\leq p(x_\alpha)^2\rightarrow
p(x)^2$ so that
$$
|\omega(xa)|=\lim_\alpha|\omega_0(x_\alpha a)|\leq p(x)\,\omega(a^*
a)^{1/2}
$$
which is (L3).

\vspace{2mm}

Suppose that $\omega_0$ is a positive linear representable functional on $\Ao$
satisfying both conditions (\ref{21}) and (\ref{22}). Then we consider
the question as to whether ${(\omega_0)}_b$ satisfies these same conditions.
This is important for the following reason. If both $\omega_0$ and $(\omega_0)_b$ satisfy (\ref{21}) and (\ref{22}), then they have continuous extensions $\omega$ and $\widetilde{(\omega_0)_b}$ respectively to $\A$ and it turns out that  $\widetilde{(\omega_0)_b}=\omega_b$. Thus $\omega_b$ satisfies conditions (L1), (L2) and (L3) and both $\delta_{\pi_\omega}$ and $\delta_{\pi_{\omega_b}}$ are spatial. Hence
a relation between the effective hamiltonians can be found.

 First
we notice that, because of the continuity of the multiplication, we
have
$$
{(\omega_0)}_b(a^*a)=\omega_0((ab)^*ab)\leq p(ab)^2\leq q(a)^2, \quad a \in \Ao
$$
for some continuous seminorm $q$ of $\tau$.

{ Thus we have the following
\begin{prop}
Let $(\A,\Ao)$ be a locally convex quasi *-algebra with
locally convex topology $\tau$, $\delta$ a *-derivation of
$(\A,\Ao)$and $\omega_0$ a positive linear functional on
$\Ao$.

(1) Suppose that $\omega_0$ satisfies the condition
$$
\omega_0(a^*a)\leq p(a)^2,\qquad \forall\,a\in\Ao
$$
for some continuous seminorm $p$ of $\tau$. Then the
continuous extension $\omega:=\tilde\omega_0$ of $\omega_0$
to $\A$ and every $\omega_b$, $b\in\Ao$, produce the
ultra-cyclic GNS-representations $\pi_\omega$ and
$\pi_{\omega_b}$.

(2) Furthermore, suppose that
$$
|\omega(\delta(a))|\leq
C\left(\sqrt{\omega(a^*a)}+\sqrt{\omega(aa^*)}\right),
\quad \forall\,a\in\Ao
$$
for some positive constant $C$. Then the *-derivation $\delta_{\pi_\omega}$ induced by $\pi_\omega$ is spatial.
If \mbox{$\pi_\omega\upharpoonright\Ao$} is bounded, in particular in the case where $\Ao$ is a C*-algebra, then
the *-derivation $\delta_{\pi_{\omega_b}}$ induced by $\pi_{\omega_b}$ is also spatial for every $b\in\Ao$.
\end{prop}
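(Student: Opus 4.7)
Part (1) reduces to verifying conditions (L1)--(L3) of Proposition 4 for $\omega$ and for each $\omega_b$, so that the GNS construction of Proposition 4 applies directly. For $\omega$ this has been carried out in the paragraph preceding the proposition. For $\omega_b$, continuity of the multiplication first gives
\begin{equation*}
(\omega_0)_b(a^*a)=\omega_0((ab)^*(ab))\leq p(ab)^2\leq q(a)^2,\qquad a\in\Ao,
\end{equation*}
for some continuous seminorm $q$ of $\tau$, so $(\omega_0)_b$ extends $\tau$-continuously to $\A$; the same continuity shows that this extension agrees with $x\mapsto\omega(b^*xb)=\omega_b(x)$, and the extension argument already used for $\omega$ supplies (L1)--(L3) for $\omega_b$.

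For Part (2), spatiality of $\delta_{\pi_\omega}$ is an immediate consequence of the sufficient condition recalled from \cite{bit1,bit2}, since by hypothesis $\omega$ satisfies both \eqref{21} and \eqref{22}. The real task is to check these two conditions for $\omega_b$, because the same criterion, applied with $\omega_b$ in place of $\omega$, will then produce an implementing operator for $\delta_{\pi_{\omega_b}}$. Condition \eqref{21} has been established in Part (1). For \eqref{22}, the plan is to exploit the Leibniz identity
\begin{equation*}
b^*\delta(a)b=\delta(b^*ab)-\delta(b^*)ab-b^*a\delta(b),\qquad a\in\Ao,
\end{equation*}
coming from axiom (iii) of a *-derivation, and to apply $\omega$ term by term. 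The first summand is dispatched by \eqref{22} for $\omega$ applied to $b^*ab\in\Ao$; here the hypothesis that $\pi_\omega\up\Ao$ is bounded enters through the estimate
\begin{equation*}
\omega\bigl((b^*ab)^*(b^*ab)\bigr)=\omega\bigl((ab)^*(bb^*)(ab)\bigr)\leq \|\pi_\omega(bb^*)\|\,\omega_b(a^*a),
\end{equation*}
together with the symmetric one for $(b^*ab)(b^*ab)^*$. The two remaining summands are bounded by (L3) applied to $\omega$, taking $\delta(b)$ as the element of $\A$; for $\omega(b^*a\delta(b))$ the hermiticity relation $\omega(y)=\overline{\omega(y^*)}$ (a consequence of (L2)) is used first to rewrite the term as $\overline{\omega(\delta(b^*)a^*b)}$, which places the $\A$-element in the position prescribed by (L3).

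The main obstacle is to control all three terms by estimates that are \emph{homogeneous} in $\omega_b(a^*a)$ and $\omega_b(aa^*)$ rather than in the values of $\omega$ itself: this is precisely where the boundedness of $\pi_\omega\up\Ao$ is indispensable, since it permits the absorption of $\pi_\omega(bb^*)$ into a single constant. Combining the three bounds then gives
\begin{equation*}
|\omega_b(\delta(a))|\leq C'\bigl(\sqrt{\omega_b(a^*a)}+\sqrt{\omega_b(aa^*)}\bigr),\qquad a\in\Ao,
\end{equation*}
with $C'=C\sqrt{\|\pi_\omega(bb^*)\|}+\gamma_{\delta(b)}$ depending on $b$ but not on $a$. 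This is \eqref{22} for $\omega_b$, and spatiality of $\delta_{\pi_{\omega_b}}$ follows from the same criterion of \cite{bit1,bit2}.
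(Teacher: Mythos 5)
Your proposal is correct and follows essentially the same route as the paper's own proof: the Leibniz decomposition $b^*\delta(a)b=\delta(b^*ab)-\delta(b^*)ab-b^*a\delta(b)$, condition \eqref{22} applied to $b^*ab$ with the boundedness of $\pi_\omega(b)$ absorbing $\pi_\omega(bb^*)$ into the constant, and a Cauchy--Schwarz/(L3) bound with $\gamma_{\delta(b)}=\|\lambda_\omega(\delta(b))\|$ for the two remaining terms. The resulting constant $C'=C\sqrt{\|\pi_\omega(bb^*)\|}+\gamma_{\delta(b)}$ matches the paper's $C\|\overline{\pi_\omega(b)}\|+\|\lambda_\omega(\delta(b))\|$ up to notation.
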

\begin{proof}
We need only to prove the last statement in (2). For this we notice that if
$b\in\Ao$ is such that $\pi_\omega(b)$ is bounded, then $\omega_b$
satisfies (\ref{22}). Indeed, taking into account that,
for every $a\in\Ao$, the equality
$b^*\delta(a)b=\delta(b^*ab)-\delta(b^*)ab-b^*a\delta(b)$ holds,
we have
$$
| \omega_b(\delta(a))|=\left| \omega(b^*\delta(a)b)\right|\leq
\left| \omega(\delta(b^*ab))\right|+\left| \omega(\delta(b^*)ab)\right|
+\left| \omega(b^*a\delta(b))\right|.
$$
Using (\ref{22}) for the first  and introducing $\pi_\omega$ for the
second and the third contributions above, we find that, for every $a \in \Ao$,
\begin{eqnarray*}
| \omega_b(\delta(a))|&\leq&
C\left( \omega(b^*a^*bb^*ab)^{1/2}+ \omega(b^*abb^*a^*b)^{1/2}\right)\\&+&
\left|\ip{\lambda_\omega(ab)}{\lambda_\omega(\delta(b))}\right|+
\left|\ip{\lambda_\omega(\delta(b))}{\lambda_\omega(a^*b)}\right|
\\
& =&C\left(\|\pi_\omega(b)^*\lambda_\omega(ab)\|
+\|\pi_\omega(b)^*\lambda_\omega(a^*b)\|\right)\\ &+&
\left|\ip{\lambda_\omega(ab)}{\lambda_\omega(\delta(b))}\right|+
\left|\ip{\lambda_\omega(\delta(b))}{\lambda_\omega(a^*b)}\right|\\
&\leq &
\left(C\|\overline{\pi_\omega(b)}\|+\|\lambda_\omega(\delta(b)\|\right)
\left(\omega_b(a^*a)^{1/2}+\omega_b(aa^*)^{1/2}\right),
\end{eqnarray*}

\end{proof}
} \vspace{2mm}

The conclusion is therefore that, under mild conditions on $\pi_\omega$, and therefore on $\omega$, both $\delta_{\pi_{\omega}}$ and $\delta_{\pi_{\omega_b}}$
turn out to be spatial so that two different effective hamiltonians $H_\omega$ and $H_{\omega_b}$ do exist, and they are related as in Section I. Once again, the physical contents of the two representations is essentially the same.

\vspace{3mm}

We end this section with some further results on the GNS representations of a quasi *-algebra $(\A,\Ao)$.

Let $(\A,\Ao)$ be a locally convex quasi *-algebra, $\omega_0$ a positive linear functional on $\Ao$ satisfying
(\ref{21}) and $\omega=\widetilde{\omega_0}$ its continuous extension on $\A$. As we have shown, both
$\omega$ and $\omega_b$, $b\in\Ao$, satisfy conditions (L1), (L2) and (L3), and so the GNS-constructions
$(\pi_\omega,\lambda_\omega,\Hil_\omega)$ and $(\pi_{\omega_b},\lambda_{\omega_b},\Hil_{\omega_b})$ are defined.
Let $\tilde\pi_\omega$ and $\tilde\pi_{\omega_b}$ be the closures of $\pi_\omega$ and $\pi_{\omega_b}$,
respectively. In this section we find conditions which imply that $\tilde\pi_\omega$ is unitarily equivalent to
the direct sum of a family of $\tilde\pi_{\omega_b}$, $b\in\Ao$.

{}
\begin{lemma}\label{lemma10} Let $x\in\A$ and $\{x_\alpha\}\subset\Ao$ such
that $\tau-\lim_\alpha x_\alpha =x$, then $\lambda_{\omega}(x_\alpha)=\lambda_{\omega_0}(x_\alpha)\rightarrow
\lambda_\omega(x)$.
\end{lemma}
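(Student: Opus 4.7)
The plan is to first establish that the net $\{\lambda_\omega(x_\alpha)\}$ is Cauchy in $\Hil_\omega$, and then identify its limit with $\lambda_\omega(x)$ by testing against a dense family of vectors.

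For the first step I would exploit the fact that for $a \in \Ao$ we have $\|\lambda_\omega(a)\|^2 = \omega(a^*a) = \omega_0(a^*a) \leq p(a)^2$ by the hypothesis \eqref{21}. Since $x_\alpha - x_\beta \in \Ao$, linearity of $\lambda_\omega$ gives
\begin{equation*}
\|\lambda_\omega(x_\alpha) - \lambda_\omega(x_\beta)\|^2 = \omega_0\bigl((x_\alpha-x_\beta)^*(x_\alpha-x_\beta)\bigr) \leq p(x_\alpha - x_\beta)^2.
\end{equation*}
Because $\{x_\alpha\}$ is $\tau$-convergent, it is $p$-Cauchy, so $\{\lambda_\omega(x_\alpha)\}$ is Cauchy in $\Hil_\omega$ and converges to some $\eta \in \Hil_\omega$.

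For the second step I would show $\eta = \lambda_\omega(x)$ by testing against the dense set $\lambda_\omega(\Ao) = \D_{\pi_\omega}$ (dense since $\pi_\omega$ is ultra-cyclic with ultra-cyclic vector $\xi_\omega = \lambda_\omega(e)$). For any $c \in \Ao$, Proposition 5 gives
\begin{equation*}
\ip{\lambda_\omega(x_\alpha)}{\lambda_\omega(c)} = \ip{\pi_\omega(x_\alpha)\xi_\omega}{\lambda_\omega(c)} = \omega(c^*x_\alpha),
\end{equation*}
and similarly $\ip{\lambda_\omega(x)}{\lambda_\omega(c)} = \omega(c^*x)$. Continuity of the right multiplication by $c^*$ with respect to $\tau$ together with continuity of $\omega = \widetilde{\omega_0}$ yields $\omega(c^*x_\alpha)\to \omega(c^*x)$, so passing to the limit
\begin{equation*}
\ip{\eta}{\lambda_\omega(c)} = \lim_\alpha \omega(c^*x_\alpha) = \omega(c^*x) = \ip{\lambda_\omega(x)}{\lambda_\omega(c)}.
\end{equation*}
Since this holds for every $c \in \Ao$ and $\lambda_\omega(\Ao)$ is dense in $\Hil_\omega$, we conclude $\eta = \lambda_\omega(x)$, completing the proof.

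The only genuinely delicate point is making sure the continuity of $\omega$ on $\A$ justifies the passage to the limit $\omega(c^*x_\alpha)\to\omega(c^*x)$; this follows from the fact that $\omega$ is the $\tau$-continuous extension of $\omega_0$ and that the map $y \mapsto c^*y$ is continuous from $\A[\tau]$ into itself, which is a standing assumption on locally convex quasi *-algebras. Everything else is a routine Cauchy-plus-density argument.
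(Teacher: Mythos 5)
Your proof is correct and follows essentially the same route as the paper's: the identical Cauchy estimate $\|\lambda_\omega(x_\alpha)-\lambda_\omega(x_\beta)\|^2\leq p(x_\alpha-x_\beta)^2$ from \eqref{21}, followed by identifying the limit by testing against $\lambda_\omega(\Ao)$ via $\omega(c^*x_\alpha)\to\omega(c^*x)$. The delicate point you flag at the end is handled in the paper simply by appealing to the definition of $\widetilde{\omega_0}$ as the $\tau$-continuous extension, exactly as you do.
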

{}
\begin{proof}
We begin with proving that $\{\lambda_\omega(x_\alpha)\}$ is a Cauchy net in the Hilbert space $\Hil_\omega$:
$$
\|\lambda_\omega(x_\alpha)-\lambda_\omega(x_\beta)\|^2= \omega((x_\alpha-x_\beta)^*(x_\alpha-x_\beta))\leq
p(x_\alpha-x_\beta)^2\rightarrow 0.
$$
Therefore there exists a vector $\xi\in\Hil_\omega$ such that $\lambda_\omega(x_\alpha)\rightarrow \xi$. We now
prove that $\xi=\lambda_\omega(x)$. Indeed we have, for every $c\in\Ao$,
$\ip{\lambda_\omega(x_\alpha)}{\lambda_\omega(c)}\rightarrow \ip{\xi}{\lambda_\omega(c)}$ and, on the other hand,
$\ip{\lambda_\omega(x_\alpha)}{\lambda_\omega(c)}=\omega(c^*x_\alpha)\rightarrow
\tilde\omega(c^*x)=\ip{\lambda_\omega(x)}{\lambda_\omega(c)}$, due to the definition of $\tilde\omega$. Therefore
$\xi=\lambda_\omega(x)$.
\end{proof}


We recall that the weak commutant ${\mathfrak M}'_w$ of a $*-$invariant subset ${\mathfrak M}$ of $\Lc^\dagger(\D,\Hil)$ is defined as
$${\mathfrak M}'_w=\{C\in \B(\Hil): \ip{X\xi}{C^*\eta}=\ip{C\xi}{X^\dagger\eta}, \; \forall X\in {\mathfrak M},\, \xi, \eta\in \D \}.$$
Then we can prove the following

\begin{prop} $\pi_\omega(\A)'_w=\pi_\omega(\Ao)'_w$. \end{prop}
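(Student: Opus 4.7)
The inclusion $\pi_\omega(\A)'_w \subseteq \pi_\omega(\Ao)'_w$ is immediate from the definition of the weak commutant, since $\pi_\omega(\Ao)$ is a $*$-invariant subset of $\pi_\omega(\A)$. So the entire content of the proposition is the reverse inclusion, and my plan is to establish it by an approximation argument based on Lemma~\ref{lemma10} and the separate continuity of the multiplication and involution in the locally convex quasi $*$-algebra $(\A,\Ao)$.

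First I would fix $C\in \pi_\omega(\Ao)'_w$ and an arbitrary $x\in\A$, and aim to show
$$
\ip{\pi_\omega(x)\xi}{C^*\eta}=\ip{C\xi}{\pi_\omega(x)^\dagger\eta}
\quad\text{for all } \xi,\eta\in\D_{\pi_\omega}=\lambda_\omega(\Ao).
$$
Writing $\xi=\lambda_\omega(a)$ and $\eta=\lambda_\omega(c)$ with $a,c\in\Ao$, I would pick a net $\{x_\alpha\}\subset\Ao$ with $\tau$-$\lim_\alpha x_\alpha=x$ (such a net exists because $\Ao$ is dense in $\A$ in the locally convex topology). Then by separate continuity of the multiplication, $x_\alpha a\to xa$ and $x_\alpha^* c\to x^* c$ in $\tau$, and Lemma~\ref{lemma10} yields
$$
\lambda_\omega(x_\alpha a)\to \lambda_\omega(xa),\qquad
\lambda_\omega(x_\alpha^* c)\to \lambda_\omega(x^* c)
$$
in $\Hil_\omega$. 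Translating through the intertwining property $\pi_\omega(y)\lambda_\omega(z)=\lambda_\omega(yz)$ for $y\in\A$, $z\in\Ao$, this says
$$
\pi_\omega(x_\alpha)\lambda_\omega(a)\to\pi_\omega(x)\lambda_\omega(a),\qquad
\pi_\omega(x_\alpha)^\dagger \lambda_\omega(c)\to \pi_\omega(x)^\dagger\lambda_\omega(c).
$$

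Now for each $\alpha$, since $x_\alpha\in\Ao$ and $C\in\pi_\omega(\Ao)'_w$,
$$
\ip{\pi_\omega(x_\alpha)\lambda_\omega(a)}{C^*\lambda_\omega(c)}
=\ip{C\lambda_\omega(a)}{\pi_\omega(x_\alpha)^\dagger \lambda_\omega(c)}.
$$
Passing to the limit in $\alpha$ on both sides, using the two norm-convergences above together with boundedness of $C$ (recall $C\in \B(\Hil_\omega)$), one obtains the desired identity for $x$. This proves $C\in\pi_\omega(\A)'_w$.

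The only nontrivial ingredient is the passage through the adjoint side, and the potential obstacle is making sure that $\pi_\omega(x_\alpha)^\dagger \lambda_\omega(c)$ really does converge to $\pi_\omega(x)^\dagger\lambda_\omega(c)$. This requires that the involution on $(\A,\Ao)$ be $\tau$-continuous so that $x_\alpha^*\to x^*$, which is part of the standing definition of a locally convex quasi $*$-algebra; once this is noted, the argument reduces to invoking Lemma~\ref{lemma10} twice and the weak commutation relation for elements of $\Ao$.
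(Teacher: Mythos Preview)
Your proof is correct and follows essentially the same approach as the paper: approximate $x\in\A$ by a net $\{x_\alpha\}\subset\Ao$, invoke Lemma~\ref{lemma10} (together with the $\tau$-continuity of the involution and separate continuity of the multiplication) to get convergence of $\pi_\omega(x_\alpha)\lambda_\omega(a)$ and $\pi_\omega(x_\alpha^*)\lambda_\omega(c)$, and pass to the limit in the weak commutation identity. The paper's version is more terse, but the argument is the same.
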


\begin{proof}
The inclusion $\pi_\omega(\A)'_w\subset\pi_\omega(\Ao)'_w$ is clear. To prove the converse inclusion we take
$C\in\pi_\omega(\Ao)'_w$ and $x\in\A$, $c_1,c_2\in\Ao$. Then we have, using the previous Lemma,
\begin{eqnarray*}
\ip{C\pi_\omega(x)\lambda_\omega(c_1)}{\lambda_\omega(c_2)}&=&\lim_\alpha
\ip{C\pi_\omega(x_\alpha)\lambda_\omega(c_1)}{\lambda_\omega(c_2)}\\
&=&\lim_\alpha \ip{C\lambda_\omega(c_1)}{\pi_\omega(x_\alpha^*)\lambda_\omega(c_2)}=
\ip{C\lambda_\omega(c_1)}{\pi_\omega(x^*)\lambda_\omega(c_2)}.
\end{eqnarray*}
\end{proof}

Let $b\in\Ao$. We denote by $P_\omega^b$ the projection of $\Hil_\omega$ onto
$\Hil_\omega^b=\overline{\pi_\omega(\Ao)\lambda_\omega(b)}$. By  Lemma \ref{lemma10} we deduce the following

\begin{lemma}
Suppose that $\pi_\omega(a)$ is bounded for every $a\in\Ao$. Then $\pi_\omega(\A)_w'$ is a von Neumann algebra and
$P_\omega^b\in \pi_\omega(\A)_w'$.
\end{lemma}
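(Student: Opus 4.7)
The plan is to split the argument into two parts and leverage the preceding Proposition, which identifies $\pi_\omega(\A)'_w$ with $\pi_\omega(\Ao)'_w$. The boundedness hypothesis will then reduce the whole question to a statement about ordinary commutants of bounded operators acting on $\Hil_\omega$.

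For the first claim, each $\pi_\omega(a)$, $a\in\Ao$, extends by hypothesis to a bounded operator $\overline{\pi_\omega(a)}\in B(\Hil_\omega)$ with $\overline{\pi_\omega(a^*)} = \overline{\pi_\omega(a)}^*$. For any $C\in B(\Hil_\omega)$, the weak commutant relation $\ip{\pi_\omega(a)\xi}{C^*\eta} = \ip{C\xi}{\pi_\omega(a^*)\eta}$ for $\xi,\eta\in\D_{\pi_\omega}$ extends by density and continuity to all of $\Hil_\omega$, and is therefore equivalent to $C$ commuting with $\overline{\pi_\omega(a)}$ in the usual sense. Hence $\pi_\omega(\Ao)'_w$ coincides with the ordinary commutant, in $B(\Hil_\omega)$, of the self-adjoint family $\{\overline{\pi_\omega(a)}:a\in\Ao\}$, which is a von Neumann algebra; by the preceding Proposition, so is $\pi_\omega(\A)'_w$.

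For the second claim, it suffices to show that $P_\omega^b$ commutes with every $\overline{\pi_\omega(a)}$, $a\in\Ao$, in the ordinary sense. Since the family $\{\overline{\pi_\omega(a)}\}_{a\in\Ao}$ is $*$-invariant, this reduces to proving that $\Hil_\omega^b = \overline{\pi_\omega(\Ao)\lambda_\omega(b)}$ is invariant under each $\overline{\pi_\omega(a)}$. On the generating dense subspace,
\[
\pi_\omega(a)\pi_\omega(c)\lambda_\omega(b) = \pi_\omega(ac)\lambda_\omega(b) \in \pi_\omega(\Ao)\lambda_\omega(b) \subset \Hil_\omega^b, \qquad c\in\Ao,
\]
and boundedness of $\overline{\pi_\omega(a)}$ lets the inclusion pass to the closure. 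Combined with the preceding Proposition this yields $P_\omega^b \in \pi_\omega(\Ao)'_w = \pi_\omega(\A)'_w$. Lemma~\ref{lemma10} enters implicitly through the preceding Proposition, whose proof uses it to extend the weak commutant identity from $\Ao$ to all of $\A$.

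I do not anticipate a genuine obstacle: the decisive move is the conversion of the weak commutant condition on the dense domain $\D_{\pi_\omega}$ into ordinary commutation on all of $\Hil_\omega$, which is routine once bounded extensions of the $\pi_\omega(a)$ are available. The only point requiring care is to ensure that the $*$-invariance of the family of bounded extensions is used properly, so that invariance of $\Hil_\omega^b$ automatically promotes it to a reducing subspace for $P_\omega^b$.
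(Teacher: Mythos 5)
Your proof is correct, and it takes the route the authors clearly intend: the paper states this lemma without writing out a proof (merely citing the convergence lemma and, implicitly, the preceding proposition $\pi_\omega(\A)'_w=\pi_\omega(\Ao)'_w$), and your argument supplies exactly the standard missing details. Converting the weak commutant condition into ordinary commutation with the bounded closures $\overline{\pi_\omega(a)}$, identifying $\pi_\omega(\Ao)'_w$ with the commutant of a $*$-invariant family of bounded operators (hence a von Neumann algebra), and checking that $\Hil_\omega^b$ is invariant under $\pi_\omega(\Ao)$ on the generating subspace $\pi_\omega(\Ao)\lambda_\omega(b)$ so that $P_\omega^b$ reduces the family, is precisely the right reconstruction.
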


Even if ${\pi_\omega}{\upharpoonright_{\Ao}}$ is bounded, $P_\omega^bD(\tilde\pi_\omega)\neq D(\tilde\pi_\omega)$
in general. Hence we introduce the following notion:

\begin{defn} Let $b \in \Ao$. We say that $b$ is a self-adjoint element
for $\pi_\omega$ if $\tilde\pi_{\omega_b}$ is a self-adjoint *-representation of $\A$.
\end{defn}

By (\cite{ait_book}, Theorem 7.4.4) we have the following
\begin{lemma} Let $b$ be a self-adjoint element for $\pi_\omega$. Then

(1) $P_\omega^b\in \pi_\omega(\Ao)'_w
=\pi_\omega(\A)'_w$

(2) $P_\omega^b \D(\tilde \pi_\omega) = \D(\tilde\pi_\omega^b)$.

(3)
$\tilde\pi_\omega^b=(\tilde\pi_\omega)_{P_\omega^b}:=P_\omega^b\tilde\pi_\omega (\cdot) P_\omega^b$.

\end{lemma}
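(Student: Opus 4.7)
The plan is to reduce the lemma to Theorem 7.4.4 of \cite{ait_book} via the unitary equivalence established in Proposition 6, together with the preceding Proposition giving $\pi_\omega(\A)'_w=\pi_\omega(\Ao)'_w$.

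First, I would recall that $\M_b:=\pi_\omega(\Ao)\lambda_\omega(b)$ is an ultra-cyclic quasi-invariant subspace of $\D_{\pi_\omega}$, so that $\pi_\omega^b=\pi_\omega\up\M_b$ is a sub *-representation of $\pi_\omega$ whose closure I will denote $\widetilde{\pi_\omega^b}$. By Proposition 6, there is a unitary operator $U:\Hil_\omega^b\to\Hil_{\omega_b}$ intertwining $\pi_\omega^b$ with $\pi_{\omega_b}$, and this intertwining passes to the closures, so that $\widetilde{\pi_\omega^b}$ and $\tilde\pi_{\omega_b}$ are unitarily equivalent. The standing assumption that $b$ is self-adjoint for $\pi_\omega$, i.e.\ that $\tilde\pi_{\omega_b}$ is self-adjoint, therefore translates into self-adjointness of $\widetilde{\pi_\omega^b}$, viewed as the closure of a sub *-representation of $\pi_\omega$ carried by the quasi-invariant subspace $\M_b$.

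Second, I would invoke Theorem 7.4.4 of \cite{ait_book}, which is exactly the abstract statement that for a *-representation $\pi$ and an ultra-cyclic quasi-invariant subspace $\M$ whose associated sub-representation has self-adjoint closure $\pi_\M$, the projection $P_\M$ onto $\overline{\M}$ belongs to the weak commutant $\pi(\Ao)'_w$, the equality $P_\M\D(\tilde\pi)=\D(\pi_\M)$ holds, and $\pi_\M=P_\M\tilde\pi(\cdot)P_\M$. Applied to $\pi=\pi_\omega$ and $\M=\M_b$, this delivers the claims $P_\omega^b\in\pi_\omega(\Ao)'_w$, $P_\omega^b\D(\tilde\pi_\omega)=\D(\tilde\pi_\omega^b)$ and $\tilde\pi_\omega^b=P_\omega^b\tilde\pi_\omega(\cdot)P_\omega^b$, which are items (2), (3) and the first part of (1).

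Third, to complete (1) I would combine $P_\omega^b\in\pi_\omega(\Ao)'_w$ with the preceding Proposition, which asserts $\pi_\omega(\A)'_w=\pi_\omega(\Ao)'_w$; this immediately places $P_\omega^b$ in $\pi_\omega(\A)'_w$ as well. The main conceptual hurdle is simply the translation between the hypothesis, phrased in terms of the GNS data $\tilde\pi_{\omega_b}$, and the setup of Theorem 7.4.4, phrased in terms of sub *-representations of $\pi_\omega$; once the unitary equivalence $\widetilde{\pi_\omega^b}\simeq\tilde\pi_{\omega_b}$ is explicitly used to move self-adjointness back to $\widetilde{\pi_\omega^b}$, the cited theorem and the weak-commutant equality give everything with no further computation.
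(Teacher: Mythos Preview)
Your proposal is correct and follows exactly the paper's approach: the paper simply states ``By (\cite{ait_book}, Theorem 7.4.4) we have the following'' without further argument. You have merely made explicit the two ingredients the paper leaves implicit, namely the transfer of self-adjointness from $\tilde\pi_{\omega_b}$ to $\widetilde{\pi_\omega^b}$ via the unitary equivalence of Proposition~6, and the invocation of the preceding Proposition for the equality $\pi_\omega(\A)'_w=\pi_\omega(\Ao)'_w$.
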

{}
By Proposition 1, $\tilde\pi_\omega^b$ is unitarily equivalent to $\tilde\pi_{\omega_b}$, and by the above Lemma
we have the following result, which answer our original question

\begin{prop}
Suppose that $\tilde\pi_\omega$ is self-adjoint. If there exist a family $\{b_\gamma\}_{\gamma \in \Gamma}$ of
self-adjoint elements for $\pi_\omega$, such that $\{P_\omega^{b_\gamma}\}$ consists of mutually orthogonal
projections and $\sum_{\gamma \in \Gamma}P_\omega^{b_\gamma}=I$, then $ \tilde\pi_\omega$ is unitarily equivalent
to $\displaystyle \bigoplus_{\gamma \in \Gamma}\,\tilde\pi_{\omega_{b_\gamma}}.$

\end{prop}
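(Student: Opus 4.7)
The plan is to build the unitary equivalence in two stages: first between $\tilde\pi_\omega$ and the direct sum $\bigoplus_\gamma \tilde\pi_\omega^{b_\gamma}$, using the family of reducing projections, and then to replace each $\tilde\pi_\omega^{b_\gamma}$ by $\tilde\pi_{\omega_{b_\gamma}}$ using the unitary equivalence already recorded in Proposition 5.

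First I would define $V: \Hil_\omega \to \bigoplus_{\gamma \in \Gamma} \Hil_\omega^{b_\gamma}$ by $V\xi = (P_\omega^{b_\gamma}\xi)_{\gamma\in\Gamma}$. Mutual orthogonality of the $P_\omega^{b_\gamma}$ together with $\sum_\gamma P_\omega^{b_\gamma} = I$ makes this an isometric surjection, hence unitary. The preceding Lemma, applied to each self-adjoint element $b_\gamma$, gives $P_\omega^{b_\gamma}\in\pi_\omega(\A)'_w$, $P_\omega^{b_\gamma}\D(\tilde\pi_\omega)=\D(\tilde\pi_\omega^{b_\gamma})$, and $\tilde\pi_\omega^{b_\gamma}(x)=P_\omega^{b_\gamma}\tilde\pi_\omega(x)P_\omega^{b_\gamma}$ on the latter domain. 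Since $\tilde\pi_\omega$ is self-adjoint and $P_\omega^{b_\gamma}$ belongs to its weak commutant, one gets the stronger commutation relation $\tilde\pi_\omega(x)P_\omega^{b_\gamma}\xi = P_\omega^{b_\gamma}\tilde\pi_\omega(x)\xi$ for every $\xi\in\D(\tilde\pi_\omega)$ and $x\in\A$ (this is the content of Theorem 7.4.4 of \cite{ait_book} invoked earlier). Consequently, for any $\xi\in\D(\tilde\pi_\omega)$, Parseval gives $\sum_\gamma\|\tilde\pi_\omega^{b_\gamma}(x)P_\omega^{b_\gamma}\xi\|^2 = \|\tilde\pi_\omega(x)\xi\|^2$, which shows that $V\xi$ lies in the domain of $\bigoplus_\gamma\tilde\pi_\omega^{b_\gamma}$ and that $V$ intertwines these actions.

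Next I would establish the reverse inclusion $V(\D(\tilde\pi_\omega))\supseteq\D\bigl(\bigoplus_\gamma\tilde\pi_\omega^{b_\gamma}\bigr)$. Given $(\eta_\gamma)$ in the direct-sum domain, set $\eta_F := \sum_{\gamma\in F}\eta_\gamma$ for $F\subset\Gamma$ finite; each $\eta_F\in\D(\tilde\pi_\omega)$, and for any $x\in\A$, $\|\tilde\pi_\omega(x)(\eta_F-\eta_{F'})\|^2 = \sum_{\gamma\in F\triangle F'}\|\tilde\pi_\omega^{b_\gamma}(x)\eta_\gamma\|^2$ tends to zero, so $\{\eta_F\}$ is Cauchy in the graph topology $t_{\tilde\pi_\omega}$. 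Because $\tilde\pi_\omega$ is closed (being self-adjoint), its domain is complete for $t_{\tilde\pi_\omega}$, so $\eta := \sum_\gamma\eta_\gamma$ lies in $\D(\tilde\pi_\omega)$ and $V\eta = (\eta_\gamma)$.

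Finally, for each $\gamma$, Proposition 5 furnishes a unitary $U_\gamma : \Hil_\omega^{b_\gamma} \to \Hil_{\omega_{b_\gamma}}$ intertwining $\tilde\pi_\omega^{b_\gamma}$ with $\tilde\pi_{\omega_{b_\gamma}}$; then $W := (\bigoplus_\gamma U_\gamma)\circ V$ is the required unitary implementing the equivalence $\tilde\pi_\omega \cong \bigoplus_\gamma \tilde\pi_{\omega_{b_\gamma}}$. The principal obstacle is the domain-matching in the second step: without the self-adjointness hypothesis on $\tilde\pi_\omega$, the projections $P_\omega^{b_\gamma}$ need not truly reduce $\tilde\pi_\omega$ (one would only have the weak commutant relation) and the graph-topology Cauchy argument above would not suffice to place $\eta$ back in $\D(\tilde\pi_\omega)$, so it is precisely the self-adjointness of $\tilde\pi_\omega$ that makes the decomposition work.
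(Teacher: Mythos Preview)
Your argument is correct and follows exactly the route the paper intends: the paper merely states that the result follows from the preceding Lemma (giving $P_\omega^{b_\gamma}\D(\tilde\pi_\omega)=\D(\tilde\pi_\omega^{b_\gamma})$ and $\tilde\pi_\omega^{b_\gamma}=(\tilde\pi_\omega)_{P_\omega^{b_\gamma}}$) together with the earlier unitary equivalence $\tilde\pi_\omega^{b}\cong\tilde\pi_{\omega_b}$, and you have simply written out the direct-sum and domain-matching details that the paper leaves implicit. One small correction: the unitary equivalence $\tilde\pi_\omega^{b_\gamma}\cong\tilde\pi_{\omega_{b_\gamma}}$ comes from Proposition~6 (and its closure), not Proposition~5, which is the GNS construction itself.
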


\section{Local  modifications of states}

We consider now the particular case in which the C*-algebra $\A$ is endowed with a {\em local structure}. Following \cite{brarob} we construct the local C*-algebra as
follows.

Let $\F$ be a set of indexes directed upward and with an orthonormality relation $\perp$ such that (i.) $\forall
\alpha\in\F$ there exists $\beta\in\F$ such that $\alpha\perp\beta$; (ii.) if $\alpha\leq \beta$ and $\beta\perp
\gamma$, $\alpha, \beta, \gamma\in\F$, then $\alpha\perp\gamma$; (iii.) if, for $\alpha, \beta, \gamma\in\F$,
$\alpha\perp\beta$ and $\alpha\perp\gamma$, there exists $\delta\in\F$ such that $\alpha\perp\delta$ and
$\delta\geq \beta, \gamma$.

Let now $\{\A_\alpha(\|.\|_\alpha), \,\alpha\in\F\}$ be a family of C*-algebras with C*-norm $\|.\|_\alpha$, indexed by $\F$, such that (a.) if $\alpha\geq
\beta$ then $\A_\alpha\supset\A_\beta$; (b.) there exists a unique identity $e$ for all $\A_\alpha$'s; (c.) if
$\alpha\perp\beta$ then $xy=yx$ for all $x\in\A_\alpha$, $y\in\A_\beta$. Let further $\A_0:=\cup_\alpha
\A_\alpha$. The uniform completion of $\Ao$ is, as it is well known, the quasi-local C*-algebra\footnote{this terminology is due to the fact that, in concrete applications, $\alpha$ is quite often a
given bounded open region in a $d-$dimensional space} with the norm
$\|\cdot\|$ inherited by the $\|.\|_\alpha$'s. If we take instead the completion of $\Ao$ w.r.t. a locally convex topology $\tau$ which makes the
involution and the multiplications continuous we get, in general, a locally convex quasi *-algebra $\A$ which we
call a {\em quasi-local quasi *-algebra}.

Given $x\in\A_0$, there will be some $\beta\in\F$ such that $x\in\A_\beta$. But of course, $x$ also belongs
to many other $\A_{\beta'}$, for instance to all those algebras which contains $\A_\beta$ as a sub-algebra. For this reason we
introduce a set $J_x$, related to $x\in\A_0$, which is defined as follows: $J_x=\{\alpha\in\F \mbox{ such that }
x\in\A_\alpha \}$. If we now define $\A_\infty=\cap_{\alpha\in\F}\A_\alpha$, then we will work here under the
assumption, which is verified for very general discrete and continuous models \cite{sew}, that $\forall\,
x\in\A_0$, $x\notin \A_\infty$, there exists $\alpha_x\in\F$ such that $\cap_{\beta\in
J_x}\A_\beta=\A_{\alpha_x}$.  We call $\alpha_x$ the {\em  support of $x$}.

{}

The following definition selects states on $\A$ with a {\em reasonable asymptotic behavior}. These states, indeed, factorize on regions far enough from the support of a given element.

\begin{defn}
A state $\omega$ over $\A$ is said to be almost clustering (AC) if, $\forall \,b\in\Ao$ and
$\forall\,\epsilon>0$, there exists $\alpha\in\F$, $\alpha\geq \alpha_b$, such that, $\forall\gamma\perp\alpha$
we have $\left|\omega(ab)-\omega(a)\omega(b)\right|\leq \epsilon \|a\|$, $\forall\,a\in\A_\gamma$.
\end{defn}

Similar definitions are given in many textbooks, like \cite{sew}, \cite{brarob} and \cite{emch}, where the
physical motivations are discussed in  detail. Related to the notion of factorization is also that of {\em local modification} of a given state. Of course, several definitions of local modifications can be introduced. The most natural one is perhaps the following: $\omega'$ is a local modification of $\omega$ if there exists
$\alpha\in\F$ such that $\forall\,\gamma\in\F$, $\gamma\perp\alpha$, $\omega'(a)=\omega(a)$ for all
$a\in\A_\gamma$. This simply implies that, outside a fixed {\em region} $\alpha$, the two states coincide.  However this condition is rather strong and has no counterpart in the existing literature on
this subject and for this reason  will not be considered here. To stay in touch with the existing literature, we rather consider the following definitions.

\begin{defn}
Given two states $\omega$ and $\omega'$ over $\A$, $\omega'$ is said to be a local modification of type 1 (1LM)
of $\omega$ if, calling $\pi_{\omega'}$ and $\pi_\omega$ their associated GNS-representations, $\pi_{\omega'}$ is
unitarily equivalent to a sub *-representation of $\pi_\omega$.

Also, $\omega'$ is said to be a local modification of type 2 (2LM) of $\omega$ if $\forall\,\epsilon>0$, there
exists $\alpha_\epsilon\in\F$ such that, $\forall\gamma\in\F$, $\gamma\perp\alpha_\epsilon$,
$\left|\omega'(x)-\omega(x)\right|\leq \epsilon \|x\|$, $\forall\,x\in\A_\gamma$.
\end{defn}

These definitions are physically motivated
essentially from what is discussed in \cite{sew}. Just to clarify the situation if, for instance, $\omega'$ is a 2LM of $\omega$ then they coincide, but for an error of order $\epsilon$, outside a region whose size is, in general,  proportional to $1/\epsilon$.

There is an apparent difference between the conditions 1LM and 2LM: if $\omega'$ is
a 2LM of $\omega$, then $\omega$ is a 2LM of $\omega'$. This symmetry is not shared by 1LM. We argue that 2LM could be used for the mathematical description of reversible local operations on a given state while 1LM seems to be more appropriate for describing the action of irreversible
operations (like a quantum mechanical measurement).

One immediate consequence of the results of Section II and of these definitions is that if $b\in\A_\alpha$ for
some $\alpha\in\F$ then the state $\omega_b(.)$ is a 1LM of $\omega$. Less trivial is the proof of the following
statement: {\em let the state $\omega$ be AC and $b\in\Ao$ with $\omega(b^\dagger b)=1$. Then $\omega_b$ is a 2LM
of $\omega$}. This is not the end of the story. Indeed, let us suppose that $\omega$ is AC and that $\omega'$ is
a 1LM of $\omega$. Therefore there exists a sequence $\{b_n\}$ of elements of $\Ao$ such that
$\omega'(a)=\lim_{n\to\infty}\omega(b_n^* a b_n)$,  $\forall\,a\in\A$, and the sequence
$\{\pi_\omega(b_n)\xi_\omega\}$ converges in $\Hil_\omega$. We suppose now that there exists $n_0\in\mb{N}$ and
$\lambda\in\F$ such that, for all $n\geq n_0$, $b_n\in\A_\lambda$. Then $\omega'$ is also a 2LM of $\omega$. The
proof of these statements are easy and will be omitted here.

 We end this section, and the paper, with the following example of what a concrete local modification of a state could be.

{\bf Discrete system:} Let $V$ be a finite region of a $d$-dimensional lattice $\Lambda$ and $| V |$ the number
of points in $V$. The local $C^*$-algebra ${\A}_V$ is generated by the Pauli operators $\vec\sigma_p = (\sigma^1_p,
\sigma^2_p, \sigma^3_p)$ and by the unit $2\times 2$ matrix $e_p$ at every point
 $p \in V$. The $\vec\sigma_p$'s are copies of the Pauli matrices localized in $p$.

If $V \subset V^{'}$ and $A_V \in {\A}_V$, then $A_V \rightarrow A_{V^{'}} = A_V \otimes ({{\atop \bigotimes }
\atop{p \in V^{'} \setminus V}} e_p)$ defines the natural imbedding of ${\A}_V$ into ${\A}_{V^{'}}$.

Let ${\vec n}=(n_1, n_2, n_3)$ be a unit vector in ${{\mb R}}^3$, and put $ (\vec\sigma\cdot {\vec n}) = n_1 \sigma^1
 + n_2 \sigma^2 + n_3 \sigma^3.
$ Then, denoting as $Sp(\vec\sigma \cdot {\vec n})$ the spectrum of $\vec\sigma \cdot {\vec n}$, we have $ Sp(\vec\sigma \cdot {\vec n}) = \{ 1,
-1\}. $ Let $|\vec n \rangle\in {\mb C}^2$ be a unit eigenvector associated with $1$.

Let now denote by ${\mathbf n}:= \{ \vec n_p \}_{p\in\Lambda}$  an
infinite sequence of unit vectors in ${{\mb R}}^3$  and $| \mathbf{n} \rangle = {{\atop \bigotimes }\atop{p}} |
\vec n_p\rangle$
 the corresponding unit vector in
the infinite tensor product ${\cal H}_\infty = {{\atop \bigotimes }\atop{p}}
 {\mb C}_p^2$.
We put $ {\A}_0 = \bigcup_V {\A}_V $ and $ {\cal D}^0_{\mbox{\small$\bfn$}} = {\A}_0 |\bfn\rangle $ and we denote the closure
of ${\cal D}^0_{\mbox{\small$\bfn$}}$ in ${\cal H}_\infty$ by
 $ {\cal H}_{\mbox{\small$\bfn$}}$.
As we saw above, to any sequence $ \bfn$  of three-vectors there corresponds a state $|\bfn\rangle$ of the
system. Such a state defines a realization $\pi_{\mbox{\small$\bfn$}}$ of ${\A}_0$ in the Hilbert space ${\cal H}_{\mbox{\small$\bfn$}}$. This
representation is faithful, since the norm completion ${\A}_S$ of ${\A}_0$ is a  simple C*-algebra. A special basis for $ {\cal H}_{\mbox{\small$\bfn$}}$ is obtained from
the {\em ground} state $|\bfn\rangle$ by {\em flipping} a finite number of spins using the following strategy:\\
Let $\vec n$ be a unit vector in ${{\mb R}}^3$, as above, and $ |\vec n\rangle$ the corresponding vector of ${\mb C}^2$. Let us choose two other unit vectors ${\vec n}^1, {\vec n}^2$ so that $({\vec n}, {\vec n}^1, {\vec n}^2)$ form an orthonormal basis of ${{\mb R}}^3$. We put $ {\vec n}_{\pm} = \frac{1}{2} ({\vec n}^1 \pm i{\vec n}^2) $ and define
$ |m,\vec n\rangle := (\vec\sigma \cdot {\vec n}_{-})^m |\vec n\rangle \ \ (m=0,1). $ Then we have
$$
(\vec\sigma \cdot {\vec n}) |m, \vec n\rangle = (-1)^m |m,\vec n\rangle \ \ (m=0,1).
$$
Thus the set $ \left\{|\mathbf{m}, \mathbf{n}\rangle = {{\atop \bigotimes }\atop{p}} |m_p, \vec{n}_p\rangle ;\ m_p = 0, 1,\ \
{\displaystyle \sum_p} m_p < \infty \right\}$ forms an orthonormal basis in ${\cal H}_{\mbox{\small$\mathbf{n}$}}$, \cite{BCS1}.

 The representation $\pi_{\mbox{\small$\mathbf{n}$}}$ is defined on the basis vectors $\{ |\mathbf{m}, \mathbf{n}\rangle \}$ by
 $$
\pi_{\mbox{\small$\mathbf{n}$}} (\sigma^i_p)|\mathbf{m}, \mathbf{n}\rangle= \sigma^i_p \mid m_p, \vec n_p\rangle
  \otimes ({{\atop
\prod }\atop{\scriptstyle{p^{'} \neq p}}} \otimes \mid m_{p ^{'}}, \vec n_{p^{'}}\rangle)\ \ \ (i= 1, 2, 3).
$$
This definition is then extended in obvious way to the whole space ${\cal H}_{\mbox{\small$\mathbf{n}$}}$. It
turns out that $\pi_{\mbox{\small$\mathbf{n}$}}$ is a {\em bounded} representation of $\A_0$ into ${\cal
H}_{\mbox{\small$\mathbf{n}$}}$. More details on this construction, particularly in connection with quasi
*-algebras, can be found in \cite{ctrev,bagtra3}.

Let now $\varphi=\otimes_{j\in\Lambda}\varphi_j$ be a fixed normalized vector in ${\cal
H}_{\mbox{\small$\mathbf{n}$}}$ and $\omega$ the related vector state: if $a\in\A_0$ then
$\omega(a)=<\varphi,\pi_{\mbox{\small$\mathbf{n}$}}(a)\varphi>$. Let now $x=\prod_{p\in\lambda}\otimes x_p$, for
some bounded  subset $\lambda$ in $\Lambda$. Here $x_p$ acts on ${\mb C}_p^2$ and $\lambda$ is the support of
$x$. Let furthermore $\gamma$ be another bounded  subset of $\Lambda$, orthogonal to $\lambda$: this means that
the sets $\lambda$ and $\gamma$ have empty intersection. Then, we fix $b=\prod_{p\in\gamma}\otimes b_p$, where as
before $b_p$ acts on ${\mb C}_p^2$. We further assume that
$<\pi_{\mbox{\small$\mathbf{n}$}}(b)\varphi,\pi_{\mbox{\small$\mathbf{n}$}}(b)\varphi>=1$. Then we can check that
$\omega(a)$ coincides with
$\omega_b(a)=<\pi_{\mbox{\small$\mathbf{n}$}}(b)\varphi,\pi_{\mbox{\small$\mathbf{n}$}}(a)\pi_{\mbox{\small$\mathbf{n}$}}(b)\varphi>$,
and this is true for all possible choices of $a$ and $b$ which are supported in separated regions. So $\omega_b$
is a local modification of $\omega$ in the strongest sense and, in particular, is a 2LM of $\omega$.

\vspace{3mm}

This example shows that the definitions of local modification given here are really physically motivated. States sharing the same properties in the case of continuous physical systems, \cite{sew}, could also be constructed with no major difficulty. To \cite{sew} we also refer for a more physically-minded discussion on 1LM of states.

\section*{Acknowledgements}

This work was partially supported by the Japan Private School Promotion Foundation and partially by CORI, Universit\`a di Palermo. F.B. and C.T. acknowledge
the warm hospitality of the Department of Applied Mathematics of the Fukuoka University. A.I. acknowledges the hospitality of the Dipartimento di Matematica
ed Applicazioni, Universit\`a di Palermo, where this work was completed. \\
{The authors wish to thank the referee for his suggestions and corrections}.

\end{document}